\theoremstyle{plain}
\newtheorem{thm}{Theorem}
\newtheorem{lem}[thm]{Lemma}
\newtheorem{prop}[thm]{Proposition}
\theoremstyle{definition}
\newtheorem{defn}[thm]{Definition}
\newtheoremstyle{case}
  {.5\topsep}
  {.5\topsep}
  {\normalfont}
  {0pt}
  {\bfseries}
  {.}
  {1em}
  {\thmname{#1}\thmnumber{ #2}\textnormal{\thmnote{ (#3)}}}
\theoremstyle{case}
\newtheorem{case}{Case}
\newtheoremstyle{detail}
  {.25\topsep}
  {.25\topsep}
  {\normalfont}
  {\parindent}
  {\itshape}
  {.}
  {1em}
  {\thmnote{#3}}
\theoremstyle{detail}
\newtheorem{detail}{Detail}
\newcommand{\bb}[1]{\mathbf{#1}}
\newcommand{\B}[1]{\mathbb{#1}}
\newcommand{\C}[1]{\mathcal{#1}}
\newcommand{\D}[1]{\operatorname{\mathrm{#1}}}
\newcommand{\T}[1]{\mathrm{#1}}
\newcommand{\defeq}{\stackrel{\hbox{\tiny def}}{=}}
\title{A Simple, Statistically Robust Test of Discrimination\thanks{%
  We thank Chris Avery, Madison Coots, Eliana La Ferrara, Julian Nyarko, Todd
  Rogers, Soroush Saghafian, Kevin Yang, and Michael Zanger-Tischler for helpful
  conversations and feedback. Data and replication code are available at
  \url{https://github.com/jgaeb/outcomepp}.
}}
\author[1]{Johann D. Gaebler}
\author[2]{Sharad Goel}
\affil[1]{%
  Department of Statistics\protect\\%
  Harvard University, 1 Oxford Street, Cambridge, MA 02138, USA%
}
\affil[ ]{\relax}
\affil[2]{%
  Harvard Kennedy School of Government\protect\\%
  Harvard University, 79 John F. Kennedy Street, Cambridge, MA 02138, USA%
}
\date{}
\begin{document}

\maketitle

\begin{abstract}
  \noindent
  In observational studies of discrimination, the most common statistical
  approaches consider either the rate at which decisions are made (benchmark
  tests) or the success rate of those decisions (outcome tests). Both tests,
  however, have well-known statistical limitations, sometimes suggesting
  discrimination even when there is none. Despite the fallibility of the
  benchmark and outcome tests individually, here we prove a surprisingly strong
  statistical guarantee: under a common non-parametric assumption, at least one
  of the two tests must be correct; consequently, when \emph{both} tests agree,
  they are guaranteed to yield correct conclusions. We present empirical
  evidence that the underlying assumption holds approximately in several
  important domains, including lending, education, and criminal justice---and
  that our hybrid test is robust to the moderate violations of the assumption
  that we observe in practice. Applying this approach to 2.8 million police
  stops across California, we find evidence of widespread racial discrimination.
\end{abstract}

\thispagestyle{empty}

\section{Introduction}%
\label{sec:intro}

When assessing claims of discrimination, researchers often begin by considering
whether decision rates differ across groups defined by race or gender, typically
after adjusting for relevant differences between groups. For example, to test
for discrimination in banking, one might estimate differences in lending rates
between White and Black loan applicants after adjusting for an individual's
credit score, income, and savings. Although such a ``benchmark test'' can be
informative, it is prone to omitted-variable bias: failing to adjust for all
relevant information can yield misleading estimates. Nonetheless, benchmark
tests have been applied in nearly every domain where discrimination is studied,
generally under an implicit assumption that analysts have access to all relevant
covariates~\citep{%
  gelman2007analysis, macdonald2020effect, grossman2023disparate,
  starr2013mandatory, bartlett2022consumer, gaebler2022causal%
}.

To mitigate the omitted-variable problem inherent to benchmark tests, \citet{%
  becker1993nobel, becker1957economics%
}
introduced the ``outcome test,'' in which one looks not at decision rates but
rather \emph{success} rates. If, for example, loans issued to Black borrowers
are repaid at higher rates than those issued to White borrowers, it suggests a
double---and discriminatory---standard, with bank officials granting loans only
to exceptionally creditworthy Black applicants. Owing perhaps to its simplicity
and intuitive appeal, the outcome test has now become one of the most popular
empirical approaches to detecting discrimination. Researchers have applied the
test to audit a wide range of decisions, including lending, hiring, publication,
and candidate election~\citep{%
  berkovec1994race, berkovec1998discrimination, pope2011s,
  chilton2020political, smart1996citation, green2009gender, anzia2011jackie%
}.
The outcome test has gained particular prominence in criminal justice, among
both researchers and policymakers~\citep{%
  goel2016precinct, pierson2020large, neil2019methodological,
  coviello2015economic, antonovics2009new, fryer2019empirical,
  ridgeway2007analysis, persico2006generalising, goel2017combatting,
  close2007searching%
}.

\begin{figure}
  \begin{center}
    \includegraphics{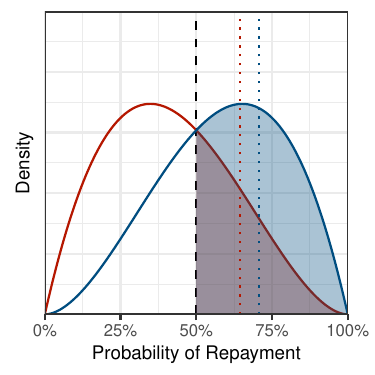}
  \end{center}
  \caption{\emph{%
    A stylized example illustrating the problem of inframarginality. The two
    curves depict the distribution of repayment probabilities for two
    hypothetical subpopulations. Applying a uniform lending threshold of 50\%
    (black vertical line) results in higher repayment rate for loan recipients
    in the blue group (71\%; blue vertical line) compared to recipients in the
    red group (64\%; red vertical line). The outcome test would thus incorrectly
    infer that members of the blue group were subjected to a more stringent
    lending standard.
  }}%
  \label{fig:inframarginality}
\end{figure}

Like the benchmark test, however, the outcome test suffers from well-known
statistical limitations~\citep{%
  ayres2002outcome, simoiu2017problem, carr1993federal, galster1993facts,
  engel2008critique, anwar2006alternative%
}.
Consider the stylized example in Figure~\ref{fig:inframarginality}, where the
red and blue curves show the distribution of repayment probability across loan
applicants in two different groups (henceforth, ``risk distributions''). In this
hypothetical, bank officials grant loans to those applicants who are at least
50\% likely to repay their loans---indicated by the black, dashed vertical
line---irrespective of group membership. Despite this uniform lending standard,
loan recipients in the blue group are more likely to repay their loans than
recipients in the red group. In statistical terms, conditional on being above
the lending threshold, the mean of the blue group is greater than the mean of
the red group. As a result, the outcome test would incorrectly conclude that
applicants in the blue group were subject to a more stringent lending standard.

This problem of ``inframarginality'' has attracted considerable attention,
prompting several attempts to place outcome tests on firmer statistical footing.
\citet{knowles2001racial} developed a model of behavior under which risk
distributions collapse to a single point, eliminating the possibility of
inframarginality. Although theoretically interesting, the key assumption in that
approach has been critiqued for being at odds with empirical
evidence~\citep{engel2008searching, jung2024omitted}.
\citet{anwar2006alternative} proposed a test based on decision and outcome rates
conditional on the race of both decision makers and those subject to those
decisions. Their method is guaranteed, under certain conditions, to produce
correct inferences, but it can only identify relative disparities between
decision makers from different race groups. Building on that work,
\citet{alesina2014test} proposed a test of racial bias in capital sentencing
based on the relative likelihood that decisions are overturned across
defendant-victim race pairs.\@ \citet{arnold2018racial} seek to sidestep
concerns of inframarginality by directly estimating outcomes for individuals at
the margin, leveraging quasi-random assignment of decision makers. Theirs is a
statistically compelling approach, but can only be applied in certain settings,
where decision makers are plausibly quasi-randomly assigned and analysts have
information on the actions of individual decision makers.
\citet{simoiu2017problem} and \citet{pierson18} aim to overcome inframarginality
by simultaneously estimating risk distributions and decision thresholds with a
parametric model. Their approach, however, is sensitive to the exact model form,
and, in particular, estimates are not identified by the data alone. Finally,
\citet{jung2024omitted} use detailed individual-level information on covariates
and outcomes to directly estimate group-specific risk distributions. The method
is effective when it can be used~\citep{%
  grossman2024reconciling, grossman2023racial, souto2024differences%
},
though the demanding data requirements limit the applicability of their
approach.

Despite the limitations of both the benchmark and outcome tests, here we show
that simply combining the two yields a \emph{robust} outcome test with
surprisingly strong statistical guarantees. In particular, if the group-specific
risk distributions satisfy the monotone likelihood ratio property
(MLRP)~\citep{karlin1956theory}, then either the benchmark test---without
adjusting for any covariates---or the standard outcome test must yield correct
conclusions. Thus, when \emph{both} the benchmark and outcome tests indicate
discrimination, that conclusion must be correct. The MLRP is a widely applied
assumption on signal distributions in information economics~\citep[e.g.,][]{%
  milgrom1981good, grossman1992analysis, athey2018value, persico2000information%
},
as well as in the outcome test literature~\citep[e.g.,][]{%
  anwar2006alternative, feigenberg2022would%
}.
We expect the MLRP to hold when it is similarly difficult to make accurate
decisions for members of each group (e.g., when the group-specific risk
distributions have similar variances, as in Figure~\ref{fig:inframarginality}).
Drawing on data from lending, education, and criminal justice, we present
empirical evidence that the MLRP is approximately satisfied in several important
domains. We further show that our hybrid test is robust to the moderate
violations of the MLRP that we observe in our data. Applying this approach to
2.8 million police stops across 56 law enforcement agencies in California, we
find evidence of pervasive discrimination in police searches of Black and
Hispanic individuals.

\section{Statistical Guarantees}%
\label{sec:simple}

In our running lending example, our robust outcome test suggests discrimination
against a group if two conditions hold simultaneously: (1) lending rates are
\emph{lower} for that group (the benchmark test), and (2) repayment rates among
loan recipients are \emph{higher} for the group (the standard outcome test). In
the stylized example depicted in Figure~\ref{fig:inframarginality}, loan
recipients in the blue group have higher repayment rates, satisfying the
standard outcome test; but members of the blue group are also more likely to
receive loans, failing the benchmark test. In this case, whereas the standard
outcome test incorrectly infers the blue group is held to a higher,
discriminatory lending standard, our robust outcome test correctly concludes
that there is insufficient evidence to support a claim of discrimination. We
next present formal conditions under which the robust outcome test is guaranteed
to produce correct results.

\subsection{Formal Setup}

Our formal setup follows the literature on analyzing outcome
tests~\citep[cf.][]{simoiu2017problem}. We imagine a population of individuals
belonging to one of two groups \(G \in \{0, 1\}\), indicating, for example,
their race or gender. Decision makers take a binary action \(D \in \{0, 1\}\)
for each individual, such as approving (\(D = 1\)) or denying (\(D = 0\)) an
individual's application for a loan. The decision maker is interested in some
binary outcome \(Y \in \{0, 1\}\), which, in our running example, corresponds to
loan repayment (e.g., \(Y = 1\) if the loan is repaid and \(Y = 0\) otherwise).
The decision maker does not know \(Y\) at decision time, but they can estimate
it based on the information \(X \in \C X\) then available to them about the
applicant. In particular, at the moment the decision is made, we assume they can
estimate the probability that \(Y = 1\) given the available information:
\[
  R \defeq \Pr(Y = 1 \mid X).
\]
In our running example, \(R\) is the decision maker's estimate of the
applicant's repayment probability. Moreover, the conditional distributions of
\(R\) by group correspond to the risk distributions in
Figure~\ref{fig:inframarginality}.

Finally, we assume that decision makers are \emph{rational}, meaning that,
within each group, their actions follow threshold rules. (Below, we relax this
assumption.) In particular, we assume they take action \(D = 1\) for individuals
in group \(G = g\) if, and only if, \(R\) exceeds some (possibly group-specific)
threshold \(t_g\):
\[
  D \defeq \begin{cases}
    1 & \text{if} \ G = g \ \text{and} \ t_g \leq R \\
    0 & \text{otherwise}.
  \end{cases}
\]
Following \citet{becker1957economics, becker1993nobel}, ``discrimination'' in
this setting corresponds to having different group-specific thresholds (i.e.,
\(t_0 \neq t_1\)), meaning decision makers apply a double standard. For
instance, in our lending example, \(t_1 > t_0\) would mean that decision makers
grant loans to members of group \(G = 1\) only if they are exceptionally
qualified---amounting to discrimination against that group.

With this setup, we now state a simplified version of our main technical result.

\begin{prop}%
\label{prop:simple}
  Suppose \(\Pr(G = 1 \mid R = r)\) is a monotonic function of \(r\), and that,
  for \(g \in \{0, 1\}\), the conditional distribution \(R \mid G = g\) has
  positive density on \((0, 1)\). Further assume that the decision thresholds
  are non-degenerate, i.e., \(0 \leq t_g < 1\) for \(g \in \{0, 1\}\). Now, if:
  \begin{enumerate}
    \item \(\Pr(D = 1 \mid G = 0) > \B \Pr(D =1 \mid G = 1)\), meaning the
      decision rate for group \(G = 1\) is lower than for group \(G = 0\); and
    \item \(\Pr(Y = 1 \mid D = 1, G = 0) < \Pr(Y = 1 \mid D = 1, G = 1)\),
      meaning the outcome rate for group \(G = 1\) is higher than for group \(G
      = 0\);
  \end{enumerate}
  then \(t_0 < t_1\).
\end{prop}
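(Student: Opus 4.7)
The plan is to argue by contrapositive: assume $t_0 \geq t_1$ and derive a contradiction with at least one of hypotheses~(1) and~(2). By Bayes' rule, monotonicity of $r \mapsto \Pr(G = 1 \mid R = r)$ is equivalent to monotonicity of the likelihood ratio between the conditional densities of $R \mid G = 1$ and $R \mid G = 0$, so the assumption is just MLRP. This naturally splits the argument into two cases according to the direction of monotonicity.

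In the first case, $\Pr(G = 1 \mid R = r)$ is (weakly) increasing, which by the standard ``MLRP implies FOSD'' argument means $R \mid G = 1$ first-order stochastically dominates $R \mid G = 0$. Writing decision rates as $\Pr(D = 1 \mid G = g) = \Pr(R \geq t_g \mid G = g)$, chaining $t_0 \geq t_1$ with FOSD gives
\[
  \Pr(R \geq t_0 \mid G = 0) \leq \Pr(R \geq t_1 \mid G = 0) \leq \Pr(R \geq t_1 \mid G = 1),
\]
which contradicts hypothesis~(1). Only hypothesis~(1) is needed in this case.

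In the second case, $\Pr(G = 1 \mid R = r)$ is (weakly) decreasing, and I would instead key on hypothesis~(2). Using $R = \Pr(Y = 1 \mid X)$ together with the tower property, the outcome rate may be rewritten as $\Pr(Y = 1 \mid D = 1, G = g) = \mathbb{E}[R \mid R \geq t_g, G = g]$. The key fact to invoke is that MLRP is preserved under upper truncation, so $R \mid R \geq t, G = g$ still has a likelihood ratio monotonic in the same direction; combined with the elementary fact that raising the truncation threshold raises a conditional mean, this yields
\[
  \mathbb{E}[R \mid R \geq t_0, G = 0] \geq \mathbb{E}[R \mid R \geq t_0, G = 1] \geq \mathbb{E}[R \mid R \geq t_1, G = 1],
\]
contradicting hypothesis~(2).

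The main obstacle is handling strictness: hypotheses~(1) and~(2) are strict, so at least one link in each chain must be strict. This is exactly what the positive-density assumption on $R \mid G = g$ together with the non-degeneracy condition $t_g < 1$ are designed to ensure, ruling out the trivial ways in which a chain of weak inequalities could collapse (e.g., the tail probabilities both landing at~$0$, or the two conditional truncations coinciding on a null set). The degenerate sub-case in which $r \mapsto \Pr(G = 1 \mid R = r)$ is \emph{constant} corresponds to identical risk distributions; there hypothesis~(1) together with positive density immediately forces $t_0 < t_1$, so this requires only a brief remark rather than a separate argument.
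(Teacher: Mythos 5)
Your proposal is correct and follows essentially the same route as the paper: a contrapositive argument with a two-way case split (your split on the direction of monotonicity of \(\Pr(G = 1 \mid R = r)\) is, under the MLRP, equivalent to the paper's split on which group has the lower base rate), using first-order stochastic dominance to settle the benchmark-test case and uniform conditional stochastic dominance plus the tower-property rewriting of the outcome rate to settle the outcome-test case. One small quibble: the contrapositive only requires the \emph{weak} reversed inequalities, which already contradict the strict hypotheses (1) and (2), so no link in either chain needs to be strict; the positive-density and non-degeneracy assumptions serve mainly to keep the conditional quantities well-defined rather than to supply strictness.
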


Proposition~\ref{prop:simple} shows that under the stated monotonicity
assumption---which, we show below, is equivalent to the standard MLRP---a group
with both lower decision rates and higher outcome rates is necessarily being
held to a higher threshold. For ease of exposition, we present this result for
threshold decision rules and binary outcomes, but a much more general version of
the result holds. Theorem~\ref{thm:general} in Section~\ref{sec:general} extends
Proposition~\ref{prop:simple} to cover real-valued outcomes (e.g., repayment
amounts rather than a binary repayment indicator) and quasi-rational decision
makers (e.g., with decisions following a logistic curve rather than a threshold
function). To illustrate the key ideas behind the general result, we give a
proof of this special case.

\subsection{Proof of Proposition~\ref{prop:simple}}

Proposition~\ref{prop:simple} cannot hold without \emph{some} hypothesis on the
risk distributions, and Figure~\ref{fig:exception} illustrates one possible
failure of the robust outcome test in our running lending example. Here, there
is no discrimination, because lending decisions are made according to a uniform
threshold. However, lending rates are lower and repayment rates higher for the
blue group. One intuitive way of capturing the issue is that it is easier for
loan officers to determine whether an applicant will default in the blue group
than the red group because the risk distribution of the blue group has higher
variance than that of the red group. The MLRP formalizes and generalizes this
intuition, capturing the key properties needed for the robust outcome test to be
correct.

\begin{figure}
    \begin{center}
      \includegraphics{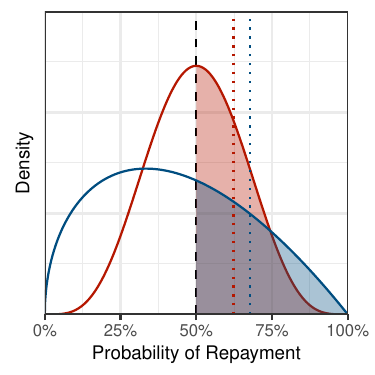}
    \end{center}
    \caption{\emph{%
      A stylized illustration of a pair of risk distributions and lending
      decisions for which the robust outcome test would be incorrect. The
      lending rates correspond to the areas of the colored regions (viz., 50\%
      for the red group and 38\% for the blue group), and the repayment rates
      are shown by the dotted red and blue lines (viz., 62\% for the red group
      and 67\% for the blue group). Applying a uniform lending threshold (50\%)
      results in a lower lending rate to applicants from the blue group, as well
      as a higher repayment rate, leading the robust outcome test to erroneously
      conclude that the thresholds differ for the two groups.
    }}%
\label{fig:exception}
\end{figure}

First, we note that the MLRP is equivalent to the---in our setting, more
intuitive---monotone conditional probability (MCP) condition in
Proposition~\ref{prop:simple}. Suppose that the conditional distributions of \(R
\mid G = g\) have positive densities given by \(f_{R \mid G = g}(r)\). The MLRP
simply states that the likelihood ratio
\[
  \frac {f_{R \mid G = 1}(r)} {f_{R \mid G = 0}(r)}
\]
is a monotonic function of \(r\). Recalling the monotonicity condition in
Proposition~\ref{prop:simple}, observe that
\[
  g(r) \defeq \Pr(G = 1 \mid R = r) = \frac {p \cdot f_{R \mid G = 1}(r)} {p
  \cdot f_{R \mid G = 0}(r) + (1 - p) \cdot f_{R \mid G = 1}(r)},
\]
where \(p \defeq \Pr(G = 1)\). Note that if
\[
  h(q) \defeq \frac {1 - p} p \cdot \frac q {1 - q},
\]
then
\[
  (h \circ g)(r) = \frac {f_{R \mid G = 1}(r)} {f_{R \mid G = 0}(r)},
\]
i.e., the likelihood ratio. As a consequence, since \(h(q)\) is monotonically
increasing, the MLRP holds in this case if and only if \(\Pr(G = 1 \mid R = r)\)
is a monotonic function of \(r\).

To understand how the MLRP connects to the proof of
Proposition~\ref{prop:simple}, let \(g^{\T {lwr}} \in \{0, 1\}\) denote the
lower ``base rate'' group and \(g^{\T {upr}} \in \{0, 1\}\) the higher base rate
group, i.e., \(g^{\T {lwr}}\) and \(g^{\T {upr}}\) are such that
\[
  \Pr(Y = 1 \mid G = g^{\T {lwr}}) \leq \Pr(Y = 1 \mid G = g^{\T {upr}}).
\]
Satisfying the MLRP implies the following two useful properties. First, the
risk-distribution of group \(G = g^{\T {lwr}}\) is ``left-shifted'' relative to
that of group \(G = g^{\T {upr}}\), i.e.,
\begin{equation}%
\label{eq:sd-simple}
  \Pr(R \leq r \mid G = g^{\T {lwr}}) \geq \Pr(R \leq r \mid G = g^{\T {upr}})
  \quad \text{for all} \ r \in [0, 1];
\end{equation}
that is, the MLRP implies that the distributions also satisfy stochastic
dominance. Secondly, the lower base rate group retains a lower base rate even
after conditioning on risk being above some threshold \(t < 1\). More
specifically,\footnote{%
  In general, the MLRP implies---and is equivalent to---uniform conditional
  stochastic dominance; here we derive Eq.~\eqref{eq:uc-mean} from uniform
  conditional stochastic dominance using the fact that, by the law of iterated
  expectations and the definition of \(R\),
  \[
    \Pr(Y = 1 \mid G = g, t \leq R) = \B E[\Pr(Y = 1 \mid X) \mid G = g, t \leq
    R] = \B E[R \mid G = g, t \leq R].
  \]
}
\begin{equation}%
\label{eq:uc-mean}
  \Pr(Y = 1 \mid G = g^{\T {lwr}}, t \leq R) \leq \Pr(Y = 1 \mid G = g^{\T {upr}}, t \leq R).
\end{equation}
See Theorems~1.C.1 and~1.C.5 in~\citet{shaked2007stochastic} for proof of these
properties.

These twin facts are enough for us to now prove Proposition~\ref{prop:simple}.

\begin{proof}[Proof of Proposition~\ref{prop:simple}]
  We proceed by proving the contrapositive: if \(t_1 \leq t_0\), then either the
  decision rate for group \(G = 0\) will be no larger than for group \(G = 1\),
  or the outcome rate will be no smaller, i.e.,
  \begin{equation}%
  \label{eq:bt}
    \Pr(D = 1 \mid G = 0) \leq \Pr(D = 1 \mid G = 1)
  \end{equation}
  or
  \begin{equation}%
  \label{eq:ot}
    \Pr(Y = 1 \mid G = 0, D = 1) \geq \Pr(Y = 1 \mid G = 1, D = 1).
  \end{equation}
  We will show that, depending on whether \(G = 0\) is the lower base rate group
  or the higher base rate group, either the benchmark test in Eq.~\eqref{eq:bt}
  or the outcome test in Eq.~\eqref{eq:ot}, respectively, will point in the
  correct direction.

  Suppose that group \(G = 0\) has the \emph{lower} base rate.
  Eq.~\eqref{eq:sd-simple} implies that
  \[
    \Pr(R \geq t_0 \mid G = 1) \geq \Pr(R \geq t_0 \mid G = 0).
  \]
  Furthermore, reducing the decision threshold from \(t_0\) to \(t_1\) can only
  increase the decision rate for group \(G = 1\). Thus, in this case, the
  decision rate for group \(G = 0\) cannot exceed that of group \(G = 1\), i.e.,
  \[
    \Pr(D = 1 \mid G = 1) = \Pr(R \geq t_1 \mid G = 1) \geq \Pr(R \geq t_0 \mid
    G = 0) = \Pr(D = 0 \mid G = 0),
  \]
  showing that Eq.~\eqref{eq:bt} holds.

  On the other hand, suppose that group \(G = 0\) has the \emph{higher} base
  rate. The outcome test looks at the base rates of the groups \emph{after}
  conditioning on receiving a positive decision. More specifically,
  \[
    \Pr(Y = 1 \mid G = g, D = 1) = \Pr(Y = 1 \mid G = g, t_g \leq R)
  \]
  by the definition of \(D\). Now, by Eq.~\eqref{eq:uc-mean}, we have that
  \[
    \Pr(Y = 1 \mid G = 0, t_0 \leq R) \geq \Pr(Y = 1 \mid G = 1, t_0 \leq R).
  \]
  Again, lowering the decision threshold from \(t_0\) to \(t_1\) only reduces
  the outcome rate for group \(G = 1\), i.e.,
  \[
    \Pr(Y = 1 \mid G = 0, D = 1) \geq \Pr(Y = 1 \mid G = 1, D = 1),
  \]
  showing that Eq.~\eqref{eq:ot} holds, and completing the proof.
\end{proof}

In proving Proposition~\ref{prop:simple}, the key insight is that, under the
MLRP, whether group \(G = 0\) is the lower or higher base rate group, either the
benchmark or the standard outcome test will correctly detect the relative
ordering of \(t_0\) and \(t_1\)---though we do not know which one. As a result,
when both tests point in the same direction, the conclusion is unambiguous.
Theorem~\ref{thm:general} below extends this argument to a much more general
setting. The chief technical obstacles there are: (1) showing that the standard
outcome test still points in the right direction for the higher base rate group,
and (2) accounting for quasi-rational decision makers and more complex risk
distributions without densities.

\section{Assessing Monotonicity}%
\label{sec:monotonicity}

The primary assumption of Proposition~\ref{prop:simple} is that \(\Pr(G = 1 \mid
R = r)\) is monotonic---which, as discussed above, is equivalent to the
group-specific risk distributions satisfying the MLRP.\@ To build intuition
about this non-parametric assumption, we consider related parametric conditions
on the group-specific risk curves. In particular, a sufficient condition for
monotonicity is that the group-specific risk curves are beta distributed with
the same variance (with possibly different means). More generally, monotonicity
holds if the risk curves are betas that cross exactly once, such as those
depicted in Figure~\ref{fig:inframarginality}. (See Appendix~\ref{app:ordering}
for more general discussion of parametric conditions that ensure monotonicity
and Figure~\ref{fig:exception} for an example where the MLRP fails.)

In our running example, equal variance roughly means that it is equally
difficult for lenders to distinguish between high- and low-risk applicants
across groups. One can imagine that an approximate version of this property
holds not only in lending, but across many domains. Indeed, if it fails to hold,
one might wonder whether decision makers are ignoring important features to mask
discriminatory intent. With redlining, for example, lenders ignored key
indicators of individual creditworthiness to justify denying loans to racial
minorities~\citep{JMLR:v24:22-1511}.

\subsection{Empirical evaluation}

We explore the extent to which monotonicity holds in practice by considering
group-specific empirical risk distributions in four domains, spanning banking,
education, and criminal justice. Specifically, we consider: (1) likelihood of
default among applicants using an online financial technology platform, using a
bevy of traditional and non-traditional variables available to the platform when
deciding whom to offer loans; (2) likelihood that law school applicants will
pass the bar exam, using their undergraduate grade-point average, LSAT score,
and other information available to schools making admissions
decisions~\citep{wightman1998lsac}; (3) risk of recidivism among defendants
awaiting court proceedings, as determined by COMPAS risk scores, which inform
judicial bail decisions~\citep{angwin2022machine, corbett2017algorithmic}; and
(4) likelihood that individuals stopped by the police are carrying contraband,
based on indicators such as the reason for the stop and the suspected offense,
which inform officer decisions to search stopped
individuals~\citep{gelman2007analysis, goel2016precinct}. We observe only a
proxy of the true outcome of interest---e.g., we see repayment outcomes only
among those who received loans, not for the entire population of applicants.
Similarly, we do not have the full suite of covariates available to decision
makers. As a result, our estimates of risk are approximate. Nonetheless, these
estimates give insight into the plausibility of the monotonicity assumption.
(See Appendix~\ref{app:data} for details on the data sources and risk estimation
methods.)

\begin{figure}
  \begin{center}
    \includegraphics{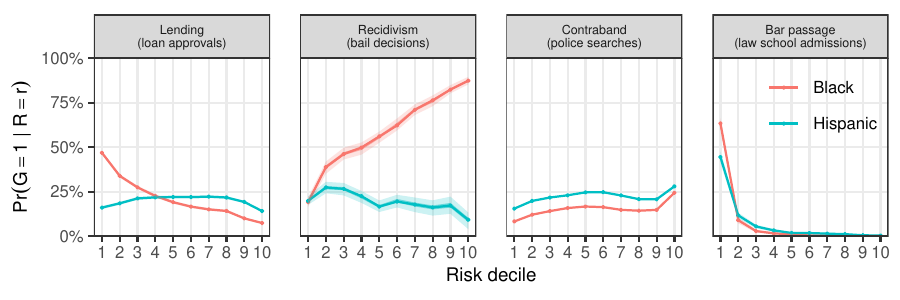}
  \end{center}
  \caption{\emph{%
    An empirical check of the monotonicity condition of
    Proposition~\ref{prop:simple} across four domains, providing evidence that
    the assumption often holds approximately in practice.
  }}%
\label{fig:montonicity}
\end{figure}

For each of these four cases, we plot, in Figure~\ref{fig:montonicity}, \(\Pr(G
= 1 \mid R = r)\) for Black vs.\ White individuals, and, separately, for
Hispanic vs.\ White individuals. (Here, ``White'' means non-Hispanic White.) We
set \(G = 1\) for the smaller group in each comparison---which corresponds to
Black or Hispanic individuals in every instance except for our policing example,
in which case White individuals are the smaller group. In every instance, we see
that the monotonicity condition holds approximately, suggesting that it is, in
practice, a relatively mild assumption. Monotonicity, however, does not hold
\emph{exactly} in these domains---nor would we expect it to in any real-world
dataset. We thus next conduct a simulation study to assess the robustness of
Proposition~\ref{prop:simple} to modest violations of monotonicity, such as
those shown in Figure~\ref{fig:montonicity}.

\subsection{Simulation study}

Starting with the empirical risk distributions in the four examples considered
above, we evaluate whether the robust outcome and standard outcome tests
correctly detect discrimination under a variety of discriminatory and
non-discriminatory scenarios. We find that across scenarios, the robust outcome
test is nearly always correct: when it indicates discrimination against a group,
that is almost always the correct inference. (Though, as expected, the test
sometimes returns an inconclusive result.) In contrast, in these simulations,
the standard outcome test often suggests discrimination against the group that
in reality received preferential treatment.

For our simulations, we compare decision and outcome rates for Black and White,
and Hispanic and White individuals under a variety of group-specific decision
thresholds \(t_g\). We sweep \(t_g\) across all percentiles of the overall risk
distribution (excluding the 0th and 100th percentiles). At each percentile, we
estimate the decision rate \(\widehat {\D {DR}}_g\) as the proportion of
individuals in group \(G = g\) whose estimated risk exceeds \(t_g\); we estimate
the outcome rate \(\widehat {\D {OR}}_g\) as the average estimated risk among
individuals in group \(G = g\) whose estimated risk exceeds \(t_g\). Based on
these quantities, we then test for discrimination using the robust and standard
outcome tests.

\begin{figure}[t!]
  \begin{center}
    \includegraphics{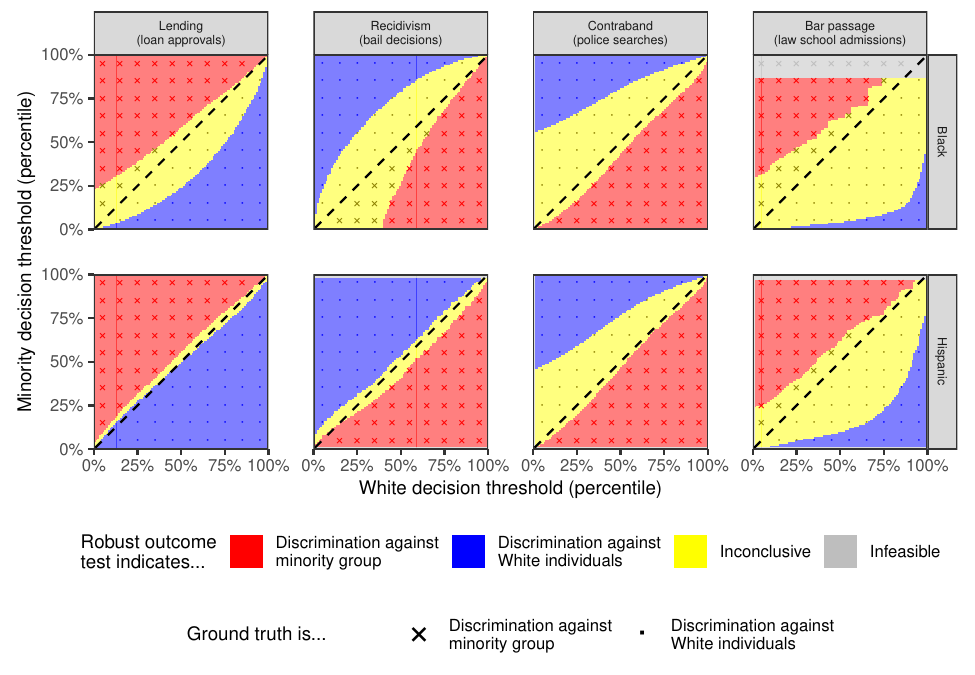}
  \end{center}
  \caption{\emph{%
    Results of a simulation study for the robust outcome test. The \(x\)-axis
    indicates the decision threshold for White individuals; the \(y\)-axis
    indicates the threshold for Black or Hispanic individuals, as appropriate.
    The upper-left and lower-right triangular regions correspond to scenarios
    where decision makers discriminate against either the minority group or
    White individuals, as indicated by the ``\,\(\times\)'' and ``\,\(\cdot\)''
    symbols, respectively; non-discriminatory scenarios are shown by the dashed
    diagonal line. Red regions indicate where the robust outcome test suggests
    discrimination against the minority group, blue regions indicate where the
    robust outcome test suggests discrimination against White individuals, and
    yellow regions indicate where the robust outcome test is inconclusive.%
  }}%
\label{fig:threshold-simulation-robust}
\end{figure}

\begin{figure}[t!]
  \begin{center}
    \includegraphics{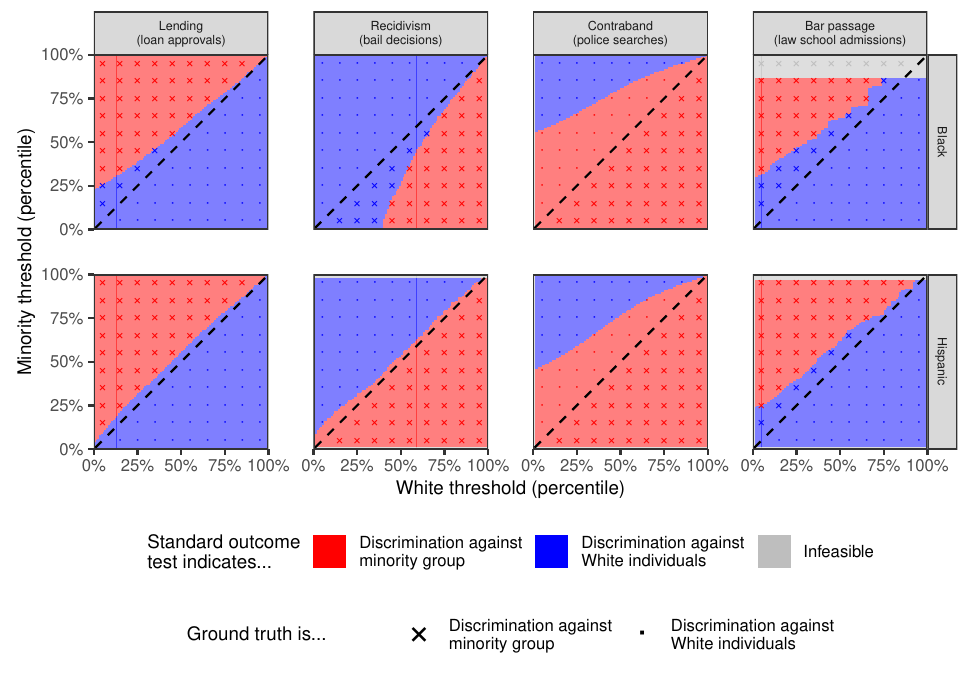}
  \end{center}
  \caption{\emph{%
    Results of a simulation study for the standard outcome test. The \(x\)-axis
    indicates the threshold for White individuals; the \(y\)-axis indicates the
    threshold for Black or Hispanic individuals, as appropriate. The upper-left
    and lower-right triangular regions correspond to scenarios where decision
    makers discriminate against either the minority group or White individuals,
    as indicated by the ``\,\(\times\)'' and ``\,\(\cdot\)'' symbols,
    respectively; non-discriminatory scenarios are shown by the dashed diagonal
    line. Red regions indicate where the standard outcome test suggests
    discrimination against the minority group, and blue regions indicate where
    the standard outcome test suggests discrimination against White individuals.%
  }}%
\label{fig:threshold-simulation-standard}
\end{figure}

The results of the simulation are shown in
Figures~\ref{fig:threshold-simulation-robust}
and~\ref{fig:threshold-simulation-standard}. As can be seen in
Figure~\ref{fig:threshold-simulation-robust}, the robust outcome test is almost
always inconclusive in the absence of discrimination---as we would hope---shown
by the yellow region covering the diagonal ``no discrimination'' line. Moreover,
in the off-diagonal regions, where the group-specific thresholds differ, the
robust outcome test frequently detects discrimination, and nearly always in the
right direction. In contrast, the standard outcome test, as shown in
Figure~\ref{fig:threshold-simulation-standard}, makes frequent errors, both
suggesting discrimination when there is none, as well as indicating
discrimination against the group that, in actuality, decision makers favored.
Thus, even in these cases where the MLRP does not hold exactly, the robust
outcome test still provides correct inferences, and, moreover, outperforms the
standard outcome test.

The extent to which the robust outcome test is able to detect
discrimination---as opposed to returning an inconclusive result---varies across
domains. In cases where base rates differ more substantially between groups, the
robust outcome test detects discrimination less frequently. In particular, the
threshold must be increased more for a lower base rate group before its outcome
rate exceeds the outcome rate of the high base rate group; and similarly, the
threshold must be increased more for a higher base rate group before its
decision rate falls below the decision rate of the low base rate group.

In our formal analysis and simulations above, we assume that decision makers are
rational within groups, making decisions based on a (potentially group-specific)
threshold. In the next section, we relax this assumption and consider
quasi-rational decision makers.

\section{General Utilities and Quasi-Rational Decision Makers}%
\label{sec:general}

The correctness of the robust outcome test holds in a more general setting than
the one presented in Section~\ref{sec:simple} that allows for both
quasi-rational decision makers, as well as more general bases for their
decisions. The general theorem and its assumptions are most naturally presented
in the language of measure theory, which we adopt here. For full details see
Appendix~\ref{app:math}.

We again imagine a population of individuals belonging to one of two groups \(G
\in \{0, 1\}\). To avoid trivialities, we assume that neither group is empty,
i.e.,
\begin{equation}%
\label{eq:assume-non-empty}
  \Pr(G = g) > 0 \qquad \text{for} \quad g \in \{0, 1\}.
\end{equation}
As above, we assume that decision makers make binary decisions \(D \in \{0,
1\}\) for each individual. We also assume that a non-zero proportion of
individuals in each group receives decision \(D = 1\), i.e.,
\begin{equation}%
\label{eq:assume-non-zero}
  \Pr(D = 1 \mid G = g) > 0 \qquad \text{for} \quad g \in \{0, 1\}.
\end{equation}
In our running lending example, loan officers base decisions on an applicant's
probability of repayment, i.e., on \(U \defeq \Pr(Y = 1 \mid X)\), where \(Y =
1\) denotes repayment of the loan and \(X\) denotes loan applicants' observable
features when lending decisions occur. Here we generalize to an arbitrary
utility \(U\), dissociating from a particular outcome \(Y\) and covariates
\(X\). For example, \(U\) could represent the lender's expected return on
lending to an applicant, rather than just the applicant's probability of
default. However, for consistency with Section~\ref{sec:simple}, we still refer
to distributions of \(U\) as \emph{risk} distributions. We assume that the
expectation of \(U\) is well-defined, i.e.,
\begin{equation}%
\label{eq:assume-finite}
  \B E[|U|] < \infty.
\end{equation}
To ensure that decisions can be compared across groups, we assume the following
analogue of the common overlap assumption~\citep{rosenbaum1983central}:
\begin{equation}%
\label{eq:assume-overlap}
  0 < \Pr(G = g \mid U) < 1 \quad \text{a.s.}
\end{equation}

In contrast to the usual setting in the outcome test literature, we do not
require there to be a single decision maker---or, more generally, a collection
of identical decision makers---making decisions based on group-specific
thresholds \(t_g\), \(g \in \{0, 1\}\). Instead, we capture the decision process
through \emph{risk-decision curves}:
\begin{equation}%
\label{eq:risk-decision}
  d_g(u) \defeq \Pr(D = 1 \mid U = u, G = g).
\end{equation}
The risk-decision curves encode the proportion of individuals who receive a
positive decision among those who belong to group \(G = g\) with utility \(U =
u\). We can ask whether one group receives positive decisions more frequently at
every level of utility, representing a double-standard. Depending on whether a
positive decision is ``desirable,'' as in lending, or ``undesirable,'' as in
policing, we understand \(d_g(u) < d_{g'}(u)\) as either discrimination or
preferential treatment.

In Section~\ref{sec:simple}, we assumed particularly simple risk-decision curves
of the following form:
\[
  d_g(u) = \bb 1(u \geq t_g),
\]
where \(\bb 1( \mathrel{\cdots} )\) denotes an indicator function. Here we
generalize beyond threshold rules to cases in which the decision makers
collectively exhibit a form of bounded rationality.

\begin{defn}[Bounded Rationality]
  We say that the risk-decision curve \(d_g(u)\) exhibits \emph{bounded
  rationality} when it is right-continuous and non-decreasing
  \(U\)-a.s.\footnote{%
    By ``\(U\)-a.s.,'' we mean that a property holds for all \(u \in \B R
    \setminus S\) where \(\Pr(U \in S) = 0\).
  }
\end{defn}

The definition of bounded rationality itself is very general, although the proof
of our main theorem requires an additional restriction defined below on the
risk-decision curves. In practice, we expect risk-decision curves to be
continuous; however, requiring only right continuity allows for the possibility
that there are thresholds where decision makers have a discontinuous increase in
their probability of choosing \(D = 1\), as, e.g., would be the case for
rational decision makers at the threshold \(t_g\).

Since it is a.s.\ bounded between zero and one, a risk-decision curve \(d_g(u)\)
exhibiting bounded rationality can be seen as the cumulative distribution
function (CDF) of a distribution \(H_g\) on the extended real numbers \(\bar {\B
R} = \B R \cup \{-\infty, \infty\}\) where
\[
  \Pr(H_g = -\infty) = \lim_{u \to -\infty} d_g(u), \quad \Pr(H_g \leq t) =
  d_g(t), \quad \text{and} \quad \Pr(H_g = \infty) = \lim_{u \to \infty} 1 -
  d_g(u).
\]
We say that \(d_g(u)\) \emph{generates} \(H_g\). Because \(d_g(u)\) is defined
only \(U\)-a.s., \(d_g(u)\) may not generate a \emph{unique} distribution
\(H_g\). But, for our purposes, the different generated distributions are
largely interchangeable, and so we will often refer to \emph{the} generated
distribution \(H_g\). (For the well-definedness of \(H_g\) and related
considerations, see Appendix~\ref{app:generation}.)

The final condition we require is that the risk-decision curves generate some
pair of distributions satisfying the MLRP. This property holds if, e.g., the
risk-decision curves are threshold rules. We now state the general version of
our main result.

\begin{thm}%
\label{thm:general}
  Suppose that the following two conditions hold:
  \begin{itemize}
    \item \(\Pr(G = 1 \mid U = u)\) is \(U\)-a.s.\ monotone,
    \item The risk-decision curves \(d_g(u)\) for \(g \in \{0, 1\}\) generate
      distributions satisfying the MLRP.
  \end{itemize}
  Under these conditions, if
  \[
    \Pr(D = 1 \mid G = 0) > \Pr(D = 1 \mid G = 1),
  \]
  and
  \[
    \B E[U \mid D = 1, G = 0] < \B E[U \mid D = 1, G = 1],
  \]
  then \(d_0(u) \geq d_1(u)\) \(U\)-a.s., where the equality is strict with
  positive probability.
\end{thm}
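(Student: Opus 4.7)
The plan is to prove the contrapositive, mirroring the case-split in the proof of Proposition~\ref{prop:simple}. Under the negation of the conclusion, MLRP on the generated distributions \(H_0, H_1\) forces their CDFs \(d_0, d_1\) to be pointwise comparable \(U\)-a.s., so the negation falls into one of two scenarios: (i) \(d_0 = d_1\) \(U\)-a.s., or (ii) \(d_1 \geq d_0\) \(U\)-a.s.\ with strict inequality on a set of positive \(U\)-measure. I would then, within each scenario, case-split on which group has the higher base rate \(\B E[U \mid G = g]\) and show that one of the two hypotheses must fail.

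Suppose \(G = 1\) has the higher base rate. I would show the benchmark hypothesis fails by chaining
\[
  \Pr(D = 1 \mid G = 0) = \B E[d_0(U) \mid G = 0] \leq \B E[d_1(U) \mid G = 0] \leq \B E[d_1(U) \mid G = 1] = \Pr(D = 1 \mid G = 1),
\]
where the first inequality uses \(d_0 \leq d_1\) \(U\)-a.s.\ (which transfers to \((U \mid G = 0)\)-a.s.\ via the overlap assumption~\eqref{eq:assume-overlap}) and the second uses stochastic dominance of \(U \mid G = 1\) over \(U \mid G = 0\) combined with monotonicity of \(d_1\).

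Now suppose \(G = 0\) has the higher base rate. Here I would show the outcome hypothesis fails by using that \(U \mid D = 1, G = g\) has Radon--Nikodym derivative proportional to \(d_g(u)\) with respect to \(U \mid G = g\). Consequently the likelihood ratio of \(U \mid D = 1, G = 1\) with respect to \(U \mid D = 1, G = 0\) factors as a product of \(d_1/d_0\) and \(f_{U \mid G = 1}/f_{U \mid G = 0}\). Both factors are \(U\)-a.s.\ non-increasing under our hypotheses---the second by the base-rate direction of the MLRP on \(U \mid G\), and the first by the reversed-hazard-rate ordering implied by the MLRP on \(H_0, H_1\). Their product is non-increasing, so \(U \mid D = 1, G = 0\) likelihood-ratio-dominates \(U \mid D = 1, G = 1\), yielding \(\B E[U \mid D = 1, G = 0] \geq \B E[U \mid D = 1, G = 1]\). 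In scenario (i) the common factor \(d = d_0 = d_1\) cancels, and the argument reduces to the fact that MLRP is preserved under conditioning on events of the form \(\{d(U) \geq c\}\).

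The main obstacle is making the MLRP machinery rigorous in this measure-theoretic generality: specifically, (a) defining likelihood-ratio ordering when neither \(U\) nor \(H_g\) need have a Lebesgue density and \(H_g\) may place mass at \(\pm\infty\); (b) justifying the passage from MLRP to both first-order stochastic dominance and reversed-hazard-rate ordering in this setting; and (c) verifying that multiplying the densities of \(U \mid G = g\) by the right-continuous non-decreasing \(d_g\) preserves MLRP in the correct direction. These foundations are precisely what Appendix~\ref{app:math} is designed to supply, reducing the case split itself to a direct analogue of the contrapositive proof of Proposition~\ref{prop:simple}.
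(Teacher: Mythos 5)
Your proposal is correct in outline and shares the paper's overall architecture: argue the contrapositive, use the fact that the MLRP on the generated distributions forces \(d_0(u) \leq d_1(u)\) \(U\)-a.s., and then split into two cases, with the benchmark case handled by the same two-inequality chain (you merely apply the two inequalities in the opposite order, which is equivalent). The genuine divergence is in the outcome-rate case, which the paper treats in two steps: first, KSD from Lemma~\ref{lem:mlrp} shows that tilting \(U_0\) and \(U_1\) by the common weight \(d_1\) preserves stochastic dominance, giving Eq.~\eqref{eq:hit-rate}; second, a Fubini rearrangement (Eq.~\eqref{eq:trick}) rewrites the outcome rate under \(d_g\) as the expectation of the non-decreasing function \(s \mapsto \B E[U \mid U \geq s, G = 0]\) against a tilt of \(H_g\), so that KSD applied to \(H_0 \succeq_{\T{lr}} H_1\) handles the switch from \(d_1\) to \(d_0\). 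You instead establish \((U \mid D = 1, G = 0) \succeq_{\T{lr}} (U \mid D = 1, G = 1)\) in one shot by factoring the likelihood ratio as \((d_1/d_0) \cdot (f_{U_1}/f_{U_0})\) and observing that each factor is non-increasing, the first because the MLRP on \(H_0, H_1\)---whose direction is pinned down by \(d_0 \leq d_1\)---implies the reversed-hazard-rate order, i.e., that \(d_0/d_1\) is non-decreasing. This is a genuinely different and arguably more direct route through the delicate case: it buys a single application of ``MLRP implies stochastic dominance implies ordered means,'' at the cost of importing the reversed-hazard-rate implication and the bookkeeping for products of a.e.-monotone Radon--Nikodym ratios (the \(0/0\) conventions and mass at \(\pm\infty\)), which you correctly flag as the technical burden. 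Two small points to tighten: the case split should formally be on the direction of the \(\succeq_{\T{lr}}\) ordering between \(U_0\) and \(U_1\) (which exists by the MCP hypothesis) rather than on base rates per se, since equal base rates do not determine that direction (though either direction then works); and before stochastic dominance can deliver the mean inequality you must verify that the conditional outcome rates are finite, which requires an estimate like Eq.~\eqref{eq:exp-exists}.
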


See Appendix~\ref{app:proof} for the proof of Theorem~\ref{thm:general}. As with
threshold decision rules, the robust outcome test holds in this more general
setting even under the modest violations of the MLRP that we see in practice;
see Appendix~\ref{app:simulation} for the results of a simulation study
analogous to the one presented above.

Theorem~\ref{thm:general} assumes that the risk-decision curves generate
distributions satisfying the MLRP.\@ This property holds in a variety of
settings, including when decisions are made according to risk thresholds \(t_g\)
as in Section~\ref{sec:simple}. It is also satisfied by logistic risk-decision
curves of the following form:
\begin{equation}
\label{eq:bounded-rationality}
  d_g(u) = \frac 1 {1 + \exp(\lambda \cdot [t_g - u])}.
\end{equation}
Here, as in the bounded rationality
literature~\citep[e.g.,][]{mckelvey1995quantal}, \(\lambda > 0\) represents the
decision makers' degree of ``rationality,'' with \(\lambda \to \infty\)
recovering threshold decision rules in the limit; and \(t_g\) represents a
``soft'' threshold at which decision makers become more likely to make decision
\(D = 1\) than \(D = 0\). Many other families of risk-decision curves satisfy
the MLRP, such as the CDFs of normal or beta distributions with the same
variance. More generally, the CDFs of log, logit, or other monotonic
transformations of normal, beta, or gamma distributions whose densities cross
once satisfy the MLRP. (See Appendix~\ref{app:ordering} for more detailed
discussion of families of distributions satisfying the MLRP, and
Figure~\ref{fig:example-policies} in the Appendix for an illustration of beta
CDF risk-decision curves satisfying the MLRP.) Intuitively, risk-decision curves
satisfy our assumption when, as in Eq.~\eqref{eq:bounded-rationality}, decision
makers are similarly sensitive to risk across groups---a property that is
strictly weaker than the ``rationality'' generally assumed in the outcome test
literature.

\section{An Application to Police Stops}%
\label{sec:ripa}

We conclude our analysis by applying the robust outcome test to data on 2.8
million police stops conducted in 2022 by 56 law enforcement agencies across
California. These data were collected as part of California's Racial Identity
and Profiling Act~\citep{RipaBoardReport2024, grossman2024reconciling}. After an
individual is stopped by the police, officers may legally conduct a search of
the individual or their vehicle if they suspect possession of contraband. Here
we use the robust outcome test to determine whether officers apply the same
standard of evidence across racial groups when deciding whom to
search.\footnote{%
  Following the outcome test literature~\citep[e.g.,][]{pierson2020large}, here
  we consider only potential discrimination in \emph{search} decisions---and not
  in, for example, stop decisions. The main advantages of focusing on search
  decisions in outcome-style analysis are: first, it is clearer what constitutes
  ``success'' (i.e., recovery of contraband); and second, outcomes are often
  reliably recorded in administrative records.
}
To do so, we compute, for each jurisdiction, the race-specific search rates and
search success rates (i.e., the proportion of searches that resulted in recovery
of contraband). If members of a group are both searched \emph{more} often and
those searches turn up contraband \emph{less} often, then the robust outcome
test suggests the group was searched at a lower standard of evidence, indicating
discrimination. (In contrast to our running lending example, where we equated
discrimination with a higher lending threshold, discrimination here corresponds
to a lower search threshold.)

\begin{figure}
  \begin{center}
    \includegraphics{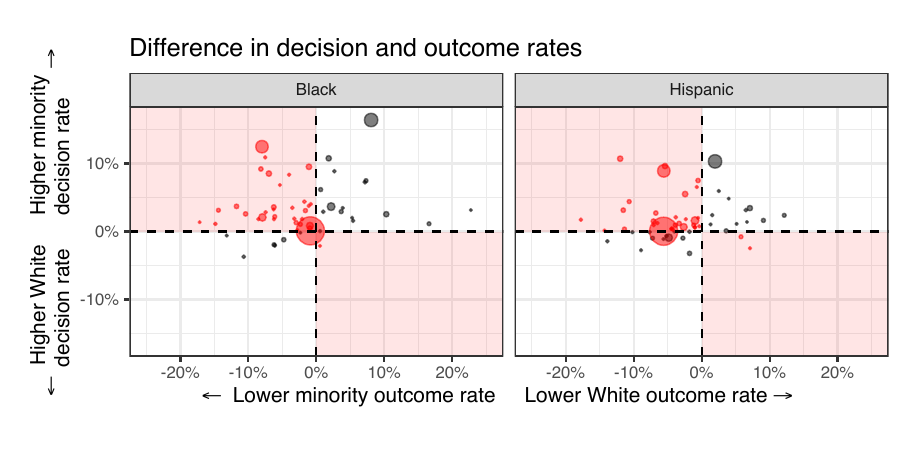}
  \end{center}
  \caption{\emph{%
    An illustration of the robust outcome test applied to 56 law enforcement
    agencies across California, with points corresponding to agencies and sized
    by the number of stops. In each panel, the robust outcome test suggests
    agencies in the upper-left quadrant discriminated against racial minorities
    when deciding whom to search, and that agencies in the lower-right quadrant
    discriminated against White individuals. The test yields inconclusive
    results for agencies in the white quadrants on the diagonal.
  }}%
\label{fig:ripa}
\end{figure}

We plot the results in Figure~\ref{fig:ripa}, with points corresponding to
agencies, sized by the number of recorded stops. Each panel compares stops of
White individuals to those of racial minorities (Black or Hispanic individuals,
respectively). In each panel, differences between group-specific search rates
are plotted on the vertical axis, and differences in search success rates on the
horizontal axis. Under the robust outcome test, the red quadrants thus indicate
racial discrimination, as those regions contain jurisdictions with both higher
search rates and lower search success rates for a given group. In particular,
the upper-left quadrants suggest discrimination against racial minorities, and
the lower-right quadrants suggest discrimination against White individuals. The
robust outcome test returns an inconclusive result for agencies in the white,
diagonal quadrants, as those correspond to both higher search rates and higher
success rates for a given group. Of the 56 agencies we consider, the robust
outcome test suggests discrimination against Black individuals by 33, and
discrimination against Hispanic individuals by 32. The test returns an
inconclusive result in nearly all of the remaining cases. The robust outcome
test thus suggests a pattern of widespread discrimination against racial
minorities in police searches across California.

The standard outcome test, in contrast, suggests White individuals were searched
at a \emph{lower} standard of evidence than Black individuals in about one-third
of agencies---corresponding to points in the right-hand quadrants---indicating
discrimination against \emph{White} people in those jurisdictions. While not
impossible, that result is at odds with an extensive analysis of police
discrimination in the literature~\citep{%
  pierson2020large, gelman2007analysis, goel2016precinct,
  grossman2024reconciling, jung2024omitted, ridgeway2007analysis,
  grogger2006testing, epp2014pulled, goel2017combatting,
  chohlas2022identifying, simoiu2017problem%
},
pointing to the statistical limitations of the standard outcome test. Due to
this lack of face validity, it is easy to dismiss results from the standard
outcome test even when it suggests more plausible findings of discrimination
against racial minorities, illustrating the value of our robust alternative.

\section{Discussion}%
\label{sec:discussion}

Our robust outcome test is a logistically straightforward and intuitively
appealing method for assessing discrimination. Applying the test requires
knowing only group-specific decision and success rates, information that is
often readily available in administrative databases. Critically, the robust
outcome test does not use individual-level covariates or decision maker
demographics, as required by other methods~\citep[e.g.,][]{%
  jung2024omitted, anwar2006alternative%
}
but which administrative records often omit.\footnote{%
  In particular, the RIPA data we analyze do not have officer demographics. They
  do have some individual-level covariates, although these covariates are
  selectively recorded, complicating statistical analyses of discrimination that
  seek to leverage that information~\citep{grossman2024reconciling}.
}
Further---and in contrast to both the benchmark and the standard outcome
tests---the robust outcome test is guaranteed to produce correct results under a
realistic assumption about the underlying risk distributions. Our empirical
analysis of police decisions further suggests that the robust outcome test is,
in practice, a more accurate barometer of bias than common alternatives.

Our theoretical and empirical results strengthen several past findings in the
outcome test literature, where decision rates were reported and consistent with
outcome rates~\citep[e.g.,][]{%
  persico2006generalising, close2007searching, ridgeway2007analysis,
  antonovics2009new, coviello2015economic, goel2017combatting, pierson2020large%
}.
In many cases, however, researchers simply apply the standard outcome test
without reporting decision rates~\citep[e.g.,][]{%
  berkovec1994race, smart1996citation, berkovec1998discrimination,
  green2009gender, anzia2011jackie, fryer2019empirical, chilton2020political%
}.
And in some instances where researchers concluded there was discrimination based
on outcome tests, the reported results of outcome and benchmark tests
diverge~\citep[e.g.,][]{pope2011s}, suggesting caution when interpreting those
findings. Our results thus highlight an important gap in the literature, and
suggest a straightforward change to improve current methodological practice.

Despite the benefits of our robust outcome test, it is important to recognize
its limitations. First, and most importantly, our proof of correctness rests on
a key monotonicity assumption. We presented empirical evidence that this
assumption holds approximately in many common cases, and we further showed that,
in practice, we obtain correct inferences even when monotonicity does not hold
exactly. But the test may yield incorrect results in settings where it is
substantially easier to make inferences about one group than another (see, e.g.,
Figure~\ref{fig:exception}). Second, computing success rates requires unbiased
outcomes. In the policing data we analyze, it seems likely that our main outcome
of interest---contraband recovery---was recorded accurately, but that may not
always be the case. Third, our robust outcome test can return inconclusive
results. In these cases, an absence of evidence of discrimination may stem
either from a lack of actual discrimination or from real discrimination that has
gone undetected. We note, though, that in our empirical analysis of police
stops, the robust outcome test produced conclusive results in the majority of
instances, revealing a pervasive pattern of discrimination. Finally, the robust
outcome test---like the standard outcome test---formally produces only a binary
determination of discrimination, not a continuous measure of the degree of
discrimination. In practice, we suspect that greater gaps in decision and
success rates point toward greater discrimination, but formally showing that
requires additional assumptions.

Recent years have brought renewed urgency to identify and ameliorate bias in
policing and beyond. We hope our work helps further this area of study, both by
providing a straightforward and statistically robust method for detecting
discrimination, and by offering a blueprint for formally studying empirical
tests of bias.

\bibliographystyle{plainnat}
\bibliography{refs}

\begin{thebibliography}{75}
\providecommand{\natexlab}[1]{#1}
\providecommand{\url}[1]{\texttt{#1}}
\expandafter\ifx\csname urlstyle\endcsname\relax
  \providecommand{\doi}[1]{doi: #1}\else
  \providecommand{\doi}{doi: \begingroup \urlstyle{rm}\Url}\fi

\bibitem[Alesina and La~Ferrara(2014)]{alesina2014test}
Alberto Alesina and Eliana La~Ferrara.
\newblock A test of racial bias in capital sentencing.
\newblock \emph{American Economic Review}, 104\penalty0 (11):\penalty0 3397--3433, 2014.

\bibitem[Angwin et~al.(2022)Angwin, Larson, Mattu, and Kirchner]{angwin2022machine}
Julia Angwin, Jeff Larson, Surya Mattu, and Lauren Kirchner.
\newblock Machine bias.
\newblock In \emph{Ethics of data and analytics}, pages 254--264. Auerbach Publications, 2022.

\bibitem[Antonovics and Knight(2009)]{antonovics2009new}
Kate Antonovics and Brian~G Knight.
\newblock A new look at racial profiling: Evidence from the {B}oston police department.
\newblock \emph{The Review of Economics and Statistics}, 91\penalty0 (1):\penalty0 163--177, 2009.

\bibitem[Anwar and Fang(2006)]{anwar2006alternative}
Shamena Anwar and Hanming Fang.
\newblock An alternative test of racial prejudice in motor vehicle searches: Theory and evidence.
\newblock \emph{American Economic Review}, 96\penalty0 (1):\penalty0 127--151, 2006.

\bibitem[Anzia and Berry(2011)]{anzia2011jackie}
Sarah~F Anzia and Christopher~R Berry.
\newblock The {Jackie (and Jill) Robinson} effect: Why do congresswomen outperform congressmen?
\newblock \emph{American Journal of Political Science}, 55\penalty0 (3):\penalty0 478--493, 2011.

\bibitem[Arnold et~al.(2018)Arnold, Dobbie, and Yang]{arnold2018racial}
David Arnold, Will Dobbie, and Crystal~S Yang.
\newblock Racial bias in bail decisions.
\newblock \emph{The Quarterly Journal of Economics}, 133\penalty0 (4):\penalty0 1885--1932, 2018.

\bibitem[Athey and Levin(2018)]{athey2018value}
Susan Athey and Jonathan Levin.
\newblock The value of information in monotone decision problems.
\newblock \emph{Research in Economics}, 72\penalty0 (1):\penalty0 101--116, 2018.

\bibitem[Ayres(2002)]{ayres2002outcome}
Ian Ayres.
\newblock Outcome tests of racial disparities in police practices.
\newblock \emph{Justice Research and Policy}, 4\penalty0 (1-2):\penalty0 131--142, 2002.

\bibitem[Bartlett et~al.(2022)Bartlett, Morse, Stanton, and Wallace]{bartlett2022consumer}
Robert Bartlett, Adair Morse, Richard Stanton, and Nancy Wallace.
\newblock Consumer-lending discrimination in the fintech era.
\newblock \emph{Journal of Financial Economics}, 143\penalty0 (1):\penalty0 30--56, 2022.

\bibitem[Becker(1957)]{becker1957economics}
Gary~S Becker.
\newblock The economics of discrimination.
\newblock \emph{University of Chicago Press Economics Books}, 1957.

\bibitem[Becker(1993)]{becker1993nobel}
Gary~S Becker.
\newblock Nobel lecture: The economic way of looking at behavior.
\newblock \emph{Journal of Political Economy}, 101\penalty0 (3):\penalty0 385--409, 1993.

\bibitem[Berkovec et~al.(1994)Berkovec, Canner, Gabriel, and Hannan]{berkovec1994race}
James~A Berkovec, Glenn~B Canner, Stuart~A Gabriel, and Timothy~H Hannan.
\newblock Race, redlining, and residential mortgage loan performance.
\newblock \emph{The Journal of Real Estate Finance and Economics}, 9:\penalty0 263--294, 1994.

\bibitem[Berkovec et~al.(1998)Berkovec, Canner, Gabriel, and Hannan]{berkovec1998discrimination}
James~A Berkovec, Glenn~B Canner, Stuart~A Gabriel, and Timothy~H Hannan.
\newblock Discrimination, competition, and loan performance in {FHA} mortgage lending.
\newblock \emph{Review of Economics and Statistics}, 80\penalty0 (2):\penalty0 241--250, 1998.

\bibitem[Bogachev(2007)]{bogachev2007measure}
V.~I. Bogachev.
\newblock \emph{Measure Theory. {V}ol. {I}, {II}}.
\newblock Springer-Verlag, Berlin, 2007.
\newblock ISBN 978-3-540-34513-8; 3-540-34513-2.
\newblock \doi{10.1007/978-3-540-34514-5}.
\newblock URL \url{https://doi.org/10.1007/978-3-540-34514-5}.

\bibitem[Brennan et~al.(2009)Brennan, Dieterich, and Ehret]{brennan2009evaluating}
Tim Brennan, William Dieterich, and Beate Ehret.
\newblock Evaluating the predictive validity of the {COMPAS} risk and needs assessment system.
\newblock \emph{Criminal Justice and behavior}, 36\penalty0 (1):\penalty0 21--40, 2009.

\bibitem[Carr and Megbolugbe(1993)]{carr1993federal}
James~H Carr and Isaac~F Megbolugbe.
\newblock The {F}ederal {R}eserve {B}ank of {B}oston study on mortgage lending revisited.
\newblock \emph{Journal of Housing Research}, pages 277--313, 1993.

\bibitem[Chilton et~al.(2020)Chilton, Masur, and Rozema]{chilton2020political}
Adam Chilton, Jonathan Masur, and Kyle Rozema.
\newblock Political ideology and the law review selection process.
\newblock \emph{American Law and Economics Review}, 22\penalty0 (1):\penalty0 211--240, 2020.

\bibitem[Chohlas-Wood et~al.(2022)Chohlas-Wood, Gerchick, Goel, Huq, Shoemaker, Shroff, and Yao]{chohlas2022identifying}
Alex Chohlas-Wood, Marissa Gerchick, Sharad Goel, Aziz~Z Huq, Amy Shoemaker, Ravi Shroff, and Keniel Yao.
\newblock Identifying and measuring excessive and discriminatory policing.
\newblock \emph{U. Chi. L. Rev.}, 89:\penalty0 441, 2022.

\bibitem[Close and Mason(2007)]{close2007searching}
Billy~R Close and Patrick~Leon Mason.
\newblock Searching for efficient enforcement: Officer characteristics and racially biased policing.
\newblock \emph{Review of Law \& Economics}, 3\penalty0 (2):\penalty0 263--321, 2007.

\bibitem[Corbett-Davies et~al.(2017)Corbett-Davies, Pierson, Feller, Goel, and Huq]{corbett2017algorithmic}
Sam Corbett-Davies, Emma Pierson, Avi Feller, Sharad Goel, and Aziz Huq.
\newblock Algorithmic decision making and the cost of fairness.
\newblock In \emph{Proceedings of the 23rd ACM SIGKDD International Conference on Knowledge Discovery and Data Mining}, pages 797--806, 2017.

\bibitem[Corbett-Davies et~al.(2023)Corbett-Davies, Gaebler, Nilforoshan, Shroff, and Goel]{JMLR:v24:22-1511}
Sam Corbett-Davies, Johann~D. Gaebler, Hamed Nilforoshan, Ravi Shroff, and Sharad Goel.
\newblock The measure and mismeasure of fairness.
\newblock \emph{Journal of Machine Learning Research}, 24\penalty0 (312):\penalty0 1--117, 2023.

\bibitem[Coviello and Persico(2015)]{coviello2015economic}
Decio Coviello and Nicola Persico.
\newblock An economic analysis of {B}lack-{W}hite disparities in the {N}ew {Y}ork {P}olice {D}epartment’s stop-and-frisk program.
\newblock \emph{The Journal of Legal Studies}, 44\penalty0 (2):\penalty0 315--360, 2015.

\bibitem[Dieterich et~al.(2011)Dieterich, Brennan, and Oliver]{dieterich2011predictive}
W.~Dieterich, T.~Brennan, and W.~L. Oliver.
\newblock Predictive validity of the {COMPAS Core Risk Scales}: {A} probation outcomes study conducted for the {Michigan Department of Corrections}.
\newblock Tech. rep., Northpointe Inc., Traverse City, MI, 2011.

\bibitem[Dieterich et~al.(2017)Dieterich, Mendoza, Hubbard, Ferro, and Brennan]{dieterich2017compas}
W.~Dieterich, C.~Mendoza, D.~Hubbard, J.~Ferro, and T.~Brennan.
\newblock {COMPAS} risk scales validation study: {A}n outcomes study conducted for the {Santa Barbara County Probation Department}.
\newblock Tech. rep., Northpointe, Traverse City, MI, 2017.

\bibitem[Dieterich et~al.(2018)Dieterich, Mendoza, Hubbard, Ferro, and Brennan]{dieterich2018compas}
W.~Dieterich, C.~Mendoza, D.~Hubbard, J.~Ferro, and T.~Brennan.
\newblock {COMPAS} risk scales validation study: {A}n outcomes study conducted for the {Riverside County Probation Department}.
\newblock Tech. rep., Northpointe, Traverse City, MI, 2018.

\bibitem[Elliott et~al.(2009)Elliott, Morrison, Fremont, McCaffrey, Pantoja, and Lurie]{elliott2009using}
Marc~N Elliott, Peter~A Morrison, Allen Fremont, Daniel~F McCaffrey, Philip Pantoja, and Nicole Lurie.
\newblock Using the {Census Bureau's} surname list to improve estimates of race/ethnicity and associated disparities.
\newblock \emph{Health Services and Outcomes Research Methodology}, 9:\penalty0 69--83, 2009.

\bibitem[Engel(2008)]{engel2008critique}
Robin~S Engel.
\newblock A critique of the “outcome test” in racial profiling research.
\newblock \emph{Justice Quarterly}, 25\penalty0 (1):\penalty0 1--36, 2008.

\bibitem[Engel and Tillyer(2008)]{engel2008searching}
Robin~S. Engel and Rob Tillyer.
\newblock Searching for equilibrium: The tenuous nature of the outcome test.
\newblock \emph{Justice Quarterly}, 25\penalty0 (1):\penalty0 54--71, 2008.

\bibitem[Epp et~al.(2014)Epp, Maynard-Moody, and Haider-Markel]{epp2014pulled}
Charles~R Epp, Steven Maynard-Moody, and Donald Haider-Markel.
\newblock \emph{Pulled over: How police stops define race and citizenship}.
\newblock University of Chicago Press, 2014.

\bibitem[{Equivant}(2019)]{equivant2019compas}
{Equivant}.
\newblock Practitioner's guide to {COMPAS Core}, April 2019.
\newblock URL \url{https://www.equivant.com/wp-content/uploads/Practitioners-Guide-to-COMPAS-Core-040419.pdf}.
\newblock Accessed: 6 March, 2024.

\bibitem[Farabee et~al.(2010)Farabee, Zhang, Roberts, and Yang]{farabee2010compas}
David Farabee, Sheldon Zhang, Robert~EL Roberts, and Joy Yang.
\newblock {COMPAS} validation study, 2010.

\bibitem[Feigenberg and Miller(2022)]{feigenberg2022would}
Benjamin Feigenberg and Conrad Miller.
\newblock Would eliminating racial disparities in motor vehicle searches have efficiency costs?
\newblock \emph{The Quarterly Journal of Economics}, 137\penalty0 (1):\penalty0 49--113, 2022.

\bibitem[Flores et~al.(2016)Flores, Bechtel, and Lowenkamp]{flores2016false}
Anthony~W Flores, Kristin Bechtel, and Christopher~T Lowenkamp.
\newblock False positives, false negatives, and false analyses: A rejoinder to ``{M}achine bias: {T}here's software used across the country to predict future criminals. {A}nd it's biased against blacks''.
\newblock \emph{Fed. Probation}, 80:\penalty0 38, 2016.

\bibitem[{Floyd v.~City of New York}(2013)]{floyd}
{Floyd v.~City of New York}.
\newblock {959 F. Supp. 2d 540, S.D.N.Y}, 2013.

\bibitem[Fryer~Jr(2019)]{fryer2019empirical}
Roland~G Fryer~Jr.
\newblock An empirical analysis of racial differences in police use of force.
\newblock \emph{Journal of Political Economy}, 127\penalty0 (3):\penalty0 1210--1261, 2019.

\bibitem[Gaebler et~al.(2022)Gaebler, Cai, Basse, Shroff, Goel, and Hill]{gaebler2022causal}
Johann Gaebler, William Cai, Guillaume Basse, Ravi Shroff, Sharad Goel, and Jennifer Hill.
\newblock A causal framework for observational studies of discrimination.
\newblock \emph{Statistics and Public Policy}, 9\penalty0 (1):\penalty0 26--48, 2022.

\bibitem[Galster(1993)]{galster1993facts}
George~C Galster.
\newblock The facts of lending discrimination cannot be argued away by examining default rates.
\newblock \emph{Housing Policy Debate}, 4\penalty0 (1):\penalty0 141--146, 1993.

\bibitem[Gelman et~al.(2007)Gelman, Fagan, and Kiss]{gelman2007analysis}
Andrew Gelman, Jeffrey Fagan, and Alex Kiss.
\newblock An analysis of the {New York City Police Department's} ``stop-and-frisk'' policy in the context of claims of racial bias.
\newblock \emph{Journal of the American Statistical Association}, 102\penalty0 (479):\penalty0 813--823, 2007.

\bibitem[Goel et~al.(2016)Goel, Rao, and Shroff]{goel2016precinct}
Sharad Goel, Justin~M Rao, and Ravi Shroff.
\newblock Precinct or prejudice? {U}nderstanding racial disparities in {New York City's} stop-and-frisk policy.
\newblock \emph{Annals of Applied Statistics}, 10\penalty0 (1):\penalty0 365--394, 2016.

\bibitem[Goel et~al.(2017)Goel, Perelman, Shroff, and Sklansky]{goel2017combatting}
Sharad Goel, Maya Perelman, Ravi Shroff, and David~Alan Sklansky.
\newblock Combatting police discrimination in the age of big data.
\newblock \emph{New Criminal Law Review}, 20\penalty0 (2):\penalty0 181--232, 2017.

\bibitem[Green et~al.(2009)Green, Jegadeesh, and Tang]{green2009gender}
Clifton Green, Narasimhan Jegadeesh, and Yue Tang.
\newblock Gender and job performance: {E}vidence from {W}all {S}treet.
\newblock \emph{Financial Analysts Journal}, 65\penalty0 (6):\penalty0 65--78, 2009.

\bibitem[Grogger and Ridgeway(2006)]{grogger2006testing}
Jeffrey Grogger and Greg Ridgeway.
\newblock Testing for racial profiling in traffic stops from behind a veil of darkness.
\newblock \emph{Journal of the American Statistical Association}, 101\penalty0 (475):\penalty0 878--887, 2006.

\bibitem[Grossman et~al.(2023)Grossman, Nyarko, and Goel]{grossman2023racial}
Joshua Grossman, Julian Nyarko, and Sharad Goel.
\newblock Racial bias as a multi-stage, multi-actor problem: An analysis of pretrial detention.
\newblock \emph{Journal of Empirical Legal Studies}, 20\penalty0 (1):\penalty0 86--133, 2023.

\bibitem[Grossman et~al.(2024{\natexlab{a}})Grossman, Nyarko, and Goel]{grossman2024reconciling}
Joshua Grossman, Julian Nyarko, and Sharad Goel.
\newblock Reconciling legal and empirical conceptions of disparate impact: An analysis of police stops across {C}alifornia.
\newblock \emph{Journal of Law and Empirical Analysis}, 1, 2024{\natexlab{a}}.

\bibitem[Grossman et~al.(2024{\natexlab{b}})Grossman, Tomkins, Page, and Goel]{grossman2023disparate}
Joshua Grossman, Sabina Tomkins, Lindsay~C Page, and Sharad Goel.
\newblock The disparate impacts of college admissions policies on {A}sian {A}merican applicants.
\newblock \emph{Scientific Reports}, 14, 2024{\natexlab{b}}.

\bibitem[Grossman and Hart(1992)]{grossman1992analysis}
Sanford~J Grossman and Oliver~D Hart.
\newblock An analysis of the principal-agent problem.
\newblock In \emph{Foundations of Insurance Economics: Readings in Economics and Finance}, pages 302--340. Springer, 1992.

\bibitem[Guerrero et~al.(2024)Guerrero, Armaline, Bianco, Criner, Dobard, Duryee, Hawkins, Kennedy, Khadjavi, Kulkarni, Ochoa, Randolph, Sierra, Taylor, Thuilliez, Vang, and Villeda]{RipaBoardReport2024}
Andrea Guerrero, William Armaline, Chad Bianco, DJ~Criner, John Dobard, Sean Duryee, Lawanda Hawkins, Brian~Eric Kennedy, Lily Khadjavi, Manjusha~P. Kulkarni, Melanie Ochoa, Rich Randolph, Angela Sierra, Tamani Taylor, Sean Thuilliez, Cha Vang, and Ronaldo Villeda.
\newblock Ripa board report 2024.
\newblock Technical report, California Department of Justice, Racial Identity and Profiling Act Advisory Board, 2024.
\newblock URL \url{https://oag.ca.gov/system/files/media/ripa-board-report-2024.pdf}.
\newblock Accessed: 2024-02-27.

\bibitem[Jung et~al.(2024)Jung, Corbett-Davies, Gaebler, Shroff, and Goel]{jung2024omitted}
Jongbin Jung, Sam Corbett-Davies, Johann Gaebler, Ravi Shroff, and Sharad Goel.
\newblock Mitigating included- and omitted-variable bias in estimates of disparate impact.
\newblock \emph{arXiv preprint arXiv:1809.05651}, 2024.

\bibitem[Karlin and Rubin(1956)]{karlin1956theory}
Samuel Karlin and Herman Rubin.
\newblock The theory of decision procedures for distributions with monotone likelihood ratio.
\newblock \emph{The Annals of Mathematical Statistics}, pages 272--299, 1956.

\bibitem[Keilson and Sumita(1982)]{keilson1982uniform}
Julian Keilson and Ushio Sumita.
\newblock Uniform stochastic ordering and related inequalities.
\newblock \emph{Canadian Journal of Statistics}, 10\penalty0 (3):\penalty0 181--198, 1982.

\bibitem[Knowles et~al.(2001)Knowles, Persico, and Todd]{knowles2001racial}
John Knowles, Nicola Persico, and Petra Todd.
\newblock Racial bias in motor vehicle searches: Theory and evidence.
\newblock \emph{Journal of Political Economy}, 109\penalty0 (1):\penalty0 203--229, 2001.

\bibitem[Lansing(2012)]{lansing2012new}
Sharon Lansing.
\newblock New {Y}ork {S}tate {COMPAS}-probation risk and need assessment study: {E}xamining the recidivism scale’s effectiveness and predictive accuracy.
\newblock \emph{Retrieved March}, 1:\penalty0 2013, 2012.

\bibitem[MacDonald and Raphael(2020)]{macdonald2020effect}
John MacDonald and Steven Raphael.
\newblock Effect of scaling back punishment on racial and ethnic disparities in criminal case outcomes.
\newblock \emph{Criminology \& Public Policy}, 19\penalty0 (4):\penalty0 1139--1164, 2020.

\bibitem[McKelvey and Palfrey(1995)]{mckelvey1995quantal}
Richard~D McKelvey and Thomas~R Palfrey.
\newblock Quantal response equilibria for normal form games.
\newblock \emph{Games and Economic Behavior}, 10\penalty0 (1):\penalty0 6--38, 1995.

\bibitem[Milgrom(1981)]{milgrom1981good}
Paul~R Milgrom.
\newblock Good news and bad news: Representation theorems and applications.
\newblock \emph{The Bell Journal of Economics}, pages 380--391, 1981.

\bibitem[Neil and Winship(2019)]{neil2019methodological}
Roland Neil and Christopher Winship.
\newblock Methodological challenges and opportunities in testing for racial discrimination in policing.
\newblock \emph{Annual Review of Criminology}, 2:\penalty0 73--98, 2019.

\bibitem[Persico(2000)]{persico2000information}
Nicola Persico.
\newblock Information acquisition in auctions.
\newblock \emph{Econometrica}, 68\penalty0 (1):\penalty0 135--148, 2000.

\bibitem[Persico and Todd(2006)]{persico2006generalising}
Nicola Persico and Petra Todd.
\newblock Generalising the hit rates test for racial bias in law enforcement, with an application to vehicle searches in {W}ichita.
\newblock \emph{The Economic Journal}, 116\penalty0 (515):\penalty0 F351--F367, 2006.

\bibitem[Pierson et~al.(2018)Pierson, Corbett-Davies, and Goel]{pierson18}
Emma Pierson, Sam Corbett-Davies, and Sharad Goel.
\newblock Fast threshold tests for detecting discrimination.
\newblock In \emph{Proceedings of the Twenty-First International Conference on Artificial Intelligence and Statistics}, 2018.

\bibitem[Pierson et~al.(2020)Pierson, Simoiu, Overgoor, Corbett-Davies, Jenson, Shoemaker, Ramachandran, Barghouty, Phillips, Shroff, and Goel]{pierson2020large}
Emma Pierson, Camelia Simoiu, Jan Overgoor, Sam Corbett-Davies, Daniel Jenson, Amy Shoemaker, Vignesh Ramachandran, Phoebe Barghouty, Cheryl Phillips, Ravi Shroff, and Sharad Goel.
\newblock A large-scale analysis of racial disparities in police stops across the {U}nited {S}tates.
\newblock \emph{Nature Human Behaviour}, 4\penalty0 (7):\penalty0 736--745, 2020.

\bibitem[Pope and Sydnor(2011)]{pope2011s}
Devin~G Pope and Justin~R Sydnor.
\newblock What's in a picture? {E}vidence of discrimination from {Prosper.com}.
\newblock \emph{Journal of Human Resources}, 46\penalty0 (1):\penalty0 53--92, 2011.

\bibitem[Reich et~al.(2016)Reich, Picard-Fritsche, Rioja, and Rotter]{reich2016evidence}
Warren~A Reich, Sarah Picard-Fritsche, Virginia~Barber Rioja, and Merrill Rotter.
\newblock Evidence-based risk assessment in a mental health court.
\newblock \emph{Center for Court Innovation}, 2016.

\bibitem[Ridgeway(2007)]{ridgeway2007analysis}
Greg Ridgeway.
\newblock Analysis of racial disparities in the {N}ew {Y}ork {P}olice {D}epartment's stop, question, and frisk practices.
\newblock \emph{RAND Safety and Justice}, 2007.

\bibitem[Rosenbaum and Rubin(1983)]{rosenbaum1983central}
Paul~R Rosenbaum and Donald~B Rubin.
\newblock The central role of the propensity score in observational studies for causal effects.
\newblock \emph{Biometrika}, 70\penalty0 (1):\penalty0 41--55, 1983.

\bibitem[R{\"u}schendorf(1991)]{ruschendorf1991conditional}
Ludger R{\"u}schendorf.
\newblock On conditional stochastic ordering of distributions.
\newblock \emph{Advances in Applied Probability}, 23\penalty0 (1):\penalty0 46--63, 1991.

\bibitem[Shaked and Shanthikumar(2007)]{shaked2007stochastic}
Moshe Shaked and J~George Shanthikumar.
\newblock \emph{Stochastic Orders}.
\newblock Springer, 2007.

\bibitem[Shanthikumar and Yao(1986)]{shanthikumar1986preservation}
J~George Shanthikumar and David~D Yao.
\newblock The preservation of likelihood ratio ordering under convolution.
\newblock \emph{Stochastic Processes and their Applications}, 23\penalty0 (2):\penalty0 259--267, 1986.

\bibitem[Simoiu et~al.(2017)Simoiu, Corbett-Davies, and Goel]{simoiu2017problem}
Camelia Simoiu, Sam Corbett-Davies, and Sharad Goel.
\newblock The problem of infra-marginality in outcome tests for discrimination.
\newblock \emph{Annals of Applied Statistics}, pages 1193--1216, 2017.

\bibitem[Smart and Waldfogel(1996)]{smart1996citation}
Scott~B Smart and Joel Waldfogel.
\newblock A citation-based test for discrimination at economics and finance journals, 1996.

\bibitem[Souto-Maior and Shroff(2024)]{souto2024differences}
Jo{\~a}o~M Souto-Maior and Ravi Shroff.
\newblock Differences in academic preparedness do not fully explain {B}lack--{W}hite enrollment disparities in advanced high school coursework.
\newblock \emph{Sociological Science}, 11:\penalty0 138--163, 2024.

\bibitem[Starr and Rehavi(2013)]{starr2013mandatory}
Sonja~B Starr and M~Marit Rehavi.
\newblock Mandatory sentencing and racial disparity: Assessing the role of prosecutors and the effects of \emph{Booker}.
\newblock \emph{Yale LJ}, 123:\penalty0 2, 2013.

\bibitem[{Terry v. Ohio}(1968)]{terry}
{Terry v. Ohio}.
\newblock {392 U.S. 1}, 1968.

\bibitem[{The People v.~William G.}(1985)]{william}
{The People v.~William G.}
\newblock {40 Cal.3d 550}, 1985.

\bibitem[Whitt(1980)]{whitt1980uniform}
Ward Whitt.
\newblock Uniform conditional stochastic order.
\newblock \emph{Journal of Applied Probability}, 17\penalty0 (1):\penalty0 112--123, 1980.

\bibitem[Wightman(1998)]{wightman1998lsac}
Linda~F Wightman.
\newblock {LSAC National Longitudinal Bar Passage Study}.
\newblock \emph{{LSAC} Research Report Series}, 1998.

\end{thebibliography}

\newpage
\appendix
\setcounter{figure}{0}
\renewcommand\thefigure{\thesection.\arabic{figure}}

\section{Mathematical Appendix}%
\label{app:math}

The proof of Theorem~\ref{thm:general} is in three parts. First, we give an
overview of important properties of risk-decision curves; in particular, we show
that risk-decision curves are well-defined and that the notion of generation,
introduced in Section~\ref{sec:general}, is also well-defined and non-vacuous.
Second, we revisit important properties of stochastic orderings, including
stochastic dominance and the MLRP.\@ Finally, we prove
Theorem~\ref{thm:general}.

\subsection{Risk-Decision Curves}%
\label{app:generation}

We begin by noting that risk-decision curves are well-defined. Recall the
definition in Eq.~\eqref{eq:risk-decision}:
\[
  d_g(u) \defeq \Pr(D = 1 \mid U = u, G = g).
\]
The risk-decision curves \(d_g(u)\) exist and are well-defined \(U\)-a.s.\ by
the Doob-Dynkin lemma because of the non-triviality assumptions in
Eqs.~\eqref{eq:assume-non-empty} and~\eqref{eq:assume-non-zero} and the overlap
assumption in Eq.~\eqref{eq:assume-overlap}.

The following lemma shows that the definition of generation is not vacuous.

\begin{lem}%
\label{lem:rdc-exists}
  Any risk-decision curve \(d_g(u)\) exhibiting bounded rationality generates
  some distribution \(H_g\).
\end{lem}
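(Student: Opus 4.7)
The plan is to construct an everywhere non-decreasing, right-continuous version of \(d_g\) on \(\B R\) and then read off the distribution \(H_g\) from it via the standard correspondence between CDFs and probability measures on \(\bar {\B R}\).

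First, I would invoke bounded rationality to fix a \(U\)-null set \(N \subset \B R\) such that, setting \(S \defeq \B R \setminus N\), the restriction \(d_g|_S\) is non-decreasing and right-continuous in the sense that \(d_g(u_1) \leq d_g(u_2)\) whenever \(u_1 \leq u_2\) with \(u_1, u_2 \in S\), and \(d_g(u) = \lim_{v \downarrow u,\, v \in S} d_g(v)\) for every \(u \in S\).

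Second, I would define the candidate modification
\[
  \tilde d_g(u) \defeq \inf\{d_g(v) : v \in S,\ v > u\},
\]
with the convention that the infimum over the empty set equals \(1\). I would then verify three things: (i) \(\tilde d_g\) is non-decreasing on \(\B R\), directly from set inclusion of the defining sets; (ii) \(\tilde d_g\) is right-continuous everywhere, via the standard argument that a function defined as an infimum of ``future'' values is automatically right-continuous; and (iii) \(\tilde d_g = d_g\) on \(S\), using monotonicity and right-continuity of \(d_g\) on \(S\) to sandwich \(\tilde d_g(u)\) between two quantities both equal to \(d_g(u)\). Property (iii) implies \(\tilde d_g = d_g\) \(U\)-a.s., so \(\tilde d_g\) is another valid version of the conditional probability \(\Pr(D = 1 \mid U = \cdot,\, G = g)\) with the added feature of being everywhere non-decreasing, right-continuous, and \([0, 1]\)-valued.

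Third, since \(\tilde d_g\) is bounded and monotone, the limits \(a \defeq \lim_{u \to -\infty} \tilde d_g(u)\) and \(b \defeq \lim_{u \to \infty} \tilde d_g(u)\) exist in \([0, 1]\). The standard CDF-to-measure correspondence then yields a unique probability measure \(H_g\) on \(\bar {\B R}\) that places mass \(a\) at \(\{-\infty\}\), assigns \(\tilde d_g(t) - \tilde d_g(s)\) to each half-open interval \((s, t] \subset \B R\), and places mass \(1 - b\) at \(\{\infty\}\). The three identities in the definition of ``generates'' from Section~\ref{sec:general}---restated with \(\tilde d_g\) in the role of \(d_g\)---are then immediate, and because \(\tilde d_g = d_g\) \(U\)-a.s., \(d_g\) itself generates \(H_g\).

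The main obstacle is the second step: interpreting ``right-continuous and non-decreasing \(U\)-a.s.''\ correctly and then verifying that the infimum construction yields an everywhere-nice function agreeing with \(d_g\) off a \(U\)-null set. Care is required at points \(u \in N\), where the raw values of \(d_g\) may be arbitrary, and in edge cases where \(S\) is bounded (e.g., when \(U\) has bounded support). The third step is routine once a legitimate CDF is in hand.
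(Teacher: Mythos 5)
Your overall strategy---extend \(d_g\) from the \(U\)-full set \(S\) to an everywhere non-decreasing, right-continuous, \([0,1]\)-valued function and then invoke the CDF-to-measure correspondence on \(\bar{\B R}\)---is the same in spirit as the paper's, but the specific extension you propose fails, and it fails precisely in the regime the lemma is meant to cover (general \(U\), possibly atomic or without full support). The problem is your step (iii): \(\tilde d_g(u) \defeq \inf\{d_g(v) : v \in S,\ v > u\}\) need not equal \(d_g(u)\) at points \(u \in S\) that are isolated from the right within \(S\), and such points can carry positive \(U\)-mass. Concretely, let \(U\) place mass \(1/2\) at each of \(0\) and \(1\), take \(N = \B R \setminus \{0,1\}\) (a legitimate \(U\)-null set), and set \(d_g(0) = 0.3\), \(d_g(1) = 0.7\); this \(d_g\) is non-decreasing and (vacuously) right-continuous on \(S = \{0,1\}\), yet \(\tilde d_g(0) = d_g(1) = 0.7 \neq 0.3\), so \(\tilde d_g \neq d_g\) on a set of positive \(U\)-probability and the resulting \(H_g\) is not generated by \(d_g\). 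Your sandwich argument for (iii) implicitly assumes every \(u \in S\) is a right-limit point of \(S\); the difficulty therefore sits at points of \(S\), not of \(N\), and is not confined to the case where \(S\) is bounded.

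The gap is fixable within your framework: first set \(m(u) \defeq \sup\{d_g(v) : v \in S,\ v \leq u\}\) (with \(\sup \emptyset = 0\)), which is non-decreasing everywhere and equals \(d_g\) on \(S\) by monotonicity, and then take the right-continuous regularization \(\tilde d_g(u) \defeq \inf_{w > u} m(w)\); one checks that this agrees with \(d_g\) on all of \(S\), using right-continuity of \(d_g\) within \(S\) at right-limit points and the constancy of \(m\) across gaps of \(S\) at right-isolated points. The paper sidesteps the issue by a different device: it approximates \(d_g\) uniformly from below by the dyadic level-set step functions \(F_n(u) = 2^{-n}\sum_{k} \bb 1(\ell(k,n) \leq u)\), each of which is right-continuous and non-decreasing on all of \(\B R\) by construction, and passes to the uniform limit, which inherits these properties and agrees with \(d_g\) on \(S\). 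Either route works, but as written your construction does not establish the lemma.
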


\begin{proof}
  First, we recall that \(d_g(u)\) is non-decreasing, bounded between zero and
  one, and right-continuous \(U\)-a.s. In other words, there exists a \(U\)-null
  set \(S\) such that these properties hold for all \(u \in \C R \defeq \B R
  \setminus S\). For \(k \in \left\{ 1, \ldots, 2^n \right\}\), let
  \begin{equation}%
  \label{eq:endpoint}
    \ell(k, n) \defeq \inf \; \left\{ \, u \in \C R : d_g(u) \geq k \cdot 2^{-n}
    \, \right\},
  \end{equation}
  and define \(F_n : \B R \to [0, 1]\) as follows:
  \begin{equation}%
  \label{eq:approx-cdf}
    F_n(u) \defeq \frac 1 {2^n} \sum_{k = 1}^{2^n} \bb 1(\ell(k, n) \leq u).
  \end{equation}

  The function \(F_n(u)\) approximates \(d_g(u)\). In particular, for all \(u
  \in \C R\),
  \begin{equation}
  \label{eq:bound}
    0 \leq d_g(u) - F_n(u) < 2^{-n}.
  \end{equation}
  To see why, suppose that \(u^* \in \C R\) is arbitrary and satisfies
  \[
    (k - 1) \cdot 2^{-n} \leq d_g(u^*) < k \cdot 2^{-n}.
  \]
  Then it immediately follows from Eq.~\eqref{eq:endpoint} that \(\ell(k - 1, n)
  \leq u^*\). (If \(k = 1\), then \(\ell(k - 1, n)\) is, strictly speaking, not
  defined by Eq.~\eqref{eq:endpoint}, but the proof differs only in one minor
  detail noted below.) Separately, by right continuity, there exists \(\delta >
  0\) such that for all \(u \in [u^*, u^* + \delta) \cap \C R\),
  \[
    d_g(u) < k \cdot 2^{-n}.
  \]
  By monotonicity, for all \(u \in \C R\) less than or equal to \(u^*\),
  \[
    d_g(u) \leq d_g(u^*) < k \cdot 2^{-n}.
  \]
  Therefore, for all \(u \in (-\infty, u^* + \delta) \cap \C R\),
  \[
    d_g(u) < k \cdot 2^{-n},
  \]
  and so
  \[
    u^* < u^* + \delta \leq \inf \; \left\{ \, u \in \C R : d_g(u) \geq k
    \cdot 2^{-n} \, \right\} = \ell(k, n).
  \]
  In particular, it follows that
  \[
    \left\{ u \in \C R : (k - 1) \cdot 2^{-n} \leq d_g(u) < k \cdot 2^{-n}
    \right\} = [\ell(k-1, n), \ell(k, n)) \cap \C R.
  \]
  (If \(k = 1\), simply replace the half-open interval with the open interval
  \((-\infty, \ell(1, n))\).) Since \(F_n(u)\) is exactly equal to \((k - 1)
  \cdot 2^{-n}\) on the latter set, Eq.~\eqref{eq:bound} follows.

  Now, since \(\ell(k, n) = \ell(2^m \cdot k, n + m)\), it also follows from the
  definition of \(F_n(u)\) in Eq.~\eqref{eq:approx-cdf} that for all \(n_0, n_1
  \in \B N\) and \(u \in \B R\),
  \[
    |F_{n_0}(u) - F_{n_1}(u)| < 2^{- \min(n_0, n_1)}.
  \]
  That is, the sequence \((F_n(u))_{n \in \B N}\) converges uniformly on all of
  \(\B R\). Since \(F_n(u)\) is non-decreasing, bounded between zero and one,
  and right-continuous for all \(n\), it follows that the pointwise limit has
  these properties as well. That is,
  \[
    F_{H_g}(u) \defeq \lim_{n \to \infty} F_n(u)
  \]
  is non-decreasing, bounded between zero and one, and right-continuous.
  Therefore \(F_{H_g}(u)\) is the CDF of a random variable \(H_g\).

  However, using Eq.~\eqref{eq:bound}, we also have that
  \[
    \sup \left| F_{H_g}(u) - d_g(u) \right| = \sup \lim_{n \to \infty} \left|
    F_n(u) - d_g(u) \right| \leq \sup \lim_{n \to \infty} 2^{-n} = 0,
  \]
  where the suprema are taken over \(u \in \C R\). Therefore \(F_{H_g}(u) =
  d_g(u)\) for all \(u \in \C R\). Since \(\Pr(U \in \C R) = 1\), it follows
  that \(d_g(u) = F_{H_g}(u)\) \(U\)-a.s., i.e., \(d_g(u)\) generates the
  distribution of \(H_g\).
\end{proof}

In light of Lemma~\ref{lem:rdc-exists}, we note two important points about risk
decision curves. First, a risk-decision curve can potentially generate more than
one distribution. In particular, CDFs are defined on all of \(\B R\), whereas
risk-decision curves are only defined \(U\)-a.s. Therefore, if \(U\) is not
supported on all of \(\B R\), it may be the case that there is not a unique
distribution \(H_g\) satisfying
\begin{equation}%
\label{eq:generate}
  d_g(u) = \Pr(H_g \leq u) \qquad \text{\(U\)-a.s.}
\end{equation}
In particular, \(d_g(u)\) does not determine the distribution of \(H_g\) on
regions outside the support of \(U\). The functions \(f(\,\cdot\,)\) we consider
in the proof of Theorem~\ref{thm:general} are constant on regions outside the
support of \(U\), however, meaning that the distributions of \(f(H_g)\) and
\(f(\tilde H_g)\) are nevertheless identical for distinct \(H_g\) and \(\tilde
H_g\) generated by \(d_g(u)\). Consequently, the different distributions
\(d_g(u)\) generates are, for our purposes, interchangeable. For this reason, we
will on occasion refer to ``the'' distribution generated by \(d_g(u)\), despite
the ambiguity.

Second, a generated distribution \(H_g\) is not necessarily a.s.\ finite.
Throughout, we do not assume that \emph{any} random variable is a.s.\ finite
unless explicitly mentioned, as was the case for \(D\), \(G\), and \(U\) in
Section~\ref{sec:general}.

\subsection{Stochastic Orderings}%
\label{app:ordering}

As noted in Section~\ref{sec:simple}, some hypotheses on the risk distributions
are needed in Theorem~\ref{thm:general} to rule out scenarios like the one shown
in Figure~\ref{fig:exception}. These hypotheses take the form of stochastic
ordering relations, which we review here. The simplest and most important
stochastic ordering is stochastic dominance.

\begin{defn}[Stochastic Dominance]
  Let \(F_0\) and \(F_1\) be arbitrary CDFs. We say that \(F_0\) (first-order)
  \emph{stochastically dominates} \(F_1\), written \(F_0 \succeq_1 F_1\), if for
  any \(t \in \B R\), \(F_0(t) \leq F_1(t)\). For random variables \(X\) and
  \(Y\), we write \(X \succeq_1 Y\) if \(F_X \succeq_1 F_Y\), i.e., if
  \begin{equation}
  \label{eq:sd}
    \Pr(X \leq t) \leq \Pr(Y \leq t).
  \end{equation}
  We say that a pair of random variables or distributions is
  \emph{stochastically ordered} if one stochastically dominates the other.
\end{defn}

Stochastic dominance has the useful property that increasing functions of the
dominating random variable have greater expectation.

\begin{lem}%
\label{lem:sd-exp}
  \(X \succeq_1 Y\) if and only if \(\B E[f(X)] \geq \B E[f(Y)]\) for any
  non-decreasing \(f(x)\) for which the expectations are well-defined.
\end{lem}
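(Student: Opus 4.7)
The plan is to prove the two directions separately, relying on a quantile coupling for the forward direction and on indicator test functions for the reverse.

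For the reverse direction, I would first observe that for any fixed $t \in \B R$ the function $f(x) \defeq \bb 1(x > t)$ is non-decreasing and bounded, so the hypothesis applies and gives $\B E[\bb 1(X > t)] \geq \B E[\bb 1(Y > t)]$, which is exactly $\Pr(X > t) \geq \Pr(Y > t)$, equivalently $\Pr(X \leq t) \leq \Pr(Y \leq t)$. Since $t$ was arbitrary, this is the definition of $X \succeq_1 Y$.

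For the forward direction, the plan is to construct a coupling on a common probability space. I would define the (left-continuous) quantile functions
\[
  F_X^{-1}(u) \defeq \inf \, \{ \, x \in \B R : F_X(x) \geq u \, \}, \qquad
  F_Y^{-1}(u) \defeq \inf \, \{ \, y \in \B R : F_Y(y) \geq u \, \},
\]
and verify the standard fact that $F_X \leq F_Y$ pointwise implies $F_X^{-1} \geq F_Y^{-1}$ pointwise on $(0, 1)$: if $F_Y(y) \geq u$, then $F_X(y) \leq F_Y(y)$, so the infimum defining $F_X^{-1}(u)$ is taken over a subset of the real numbers used for $F_Y^{-1}(u)$, and therefore $F_X^{-1}(u) \geq F_Y^{-1}(u)$. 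Letting $U \sim \T{Uniform}(0, 1)$ and setting $\tilde X \defeq F_X^{-1}(U)$ and $\tilde Y \defeq F_Y^{-1}(U)$, the standard inverse-CDF argument shows $\tilde X \stackrel{d}{=} X$ and $\tilde Y \stackrel{d}{=} Y$, while by construction $\tilde X \geq \tilde Y$ almost surely.

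With the coupling in hand, monotonicity of $f$ gives $f(\tilde X) \geq f(\tilde Y)$ almost surely, and monotonicity of the expectation (in the extended sense, as long as both expectations are well-defined) yields $\B E[f(X)] = \B E[f(\tilde X)] \geq \B E[f(\tilde Y)] = \B E[f(Y)]$, completing the proof. The main technical point requiring care is the inequality $F_X^{-1} \geq F_Y^{-1}$ from $F_X \leq F_Y$; everything else reduces to standard properties of the quantile transform and monotonicity of integration.
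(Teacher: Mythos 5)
Your proof is correct and is essentially the standard argument: the paper does not prove Lemma~\ref{lem:sd-exp} itself but defers to \citet{shaked2007stochastic}, whose proof is exactly your quantile-coupling construction for the forward direction and indicator test functions for the reverse, and your closing remark about well-definedness handles the extended-real-valued case the paper flags in its parenthetical. One small wording slip: to get \(F_X^{-1}(u) \geq F_Y^{-1}(u)\) you should argue that \(F_X(y) \geq u\) implies \(F_Y(y) \geq F_X(y) \geq u\), so that \(\{y : F_X(y) \geq u\} \subseteq \{y : F_Y(y) \geq u\}\) and the infimum over the smaller set is larger; your sentence starts from \(F_Y(y) \geq u\), which does not yield the needed containment, though the conclusion you state is the right one.
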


For proof of Lemma~\ref{lem:sd-exp}, see, e.g.,~\citet{shaked2007stochastic}.
(The statement in \citeauthor{shaked2007stochastic} implicitly assumes that
\(X\) and \(Y\) are a.s.\ finite, but the proof does not use this fact.)

To prove Theorem~\ref{thm:general}, we will need the following strengthening of
stochastic dominance.

\begin{defn}[Monotone Likelihood Ratio Property]%
\label{def:mlrp}
  Let \(F_0\) and \(F_1\) be arbitrary cumulative distribution functions. Let
  \(f_0\) and \(f_1\) be the Radon-Nikodym derivatives (i.e., densities) of
  \(F_0\) and \(F_1\) with respect to their sum measure \(\mu\).\footnote{%
    I.e., for any interval \((-\infty, t]\),
    \[
      \mu((-\infty, t]) = F_0(t) + F_1(t),
    \]
    with \(\mu\) suitably extended to all Borel sets by the Carath\'eodory
    extension theorem. Both \(F_0\) and \(F_1\) are absolutely continuous with
    respect to \(\mu\).
  }
  We say that \(F_0\) and \(F_1\) have the \emph{monotone likelihood ratio
  property} (MLRP) with \(F_0\) dominating, denoted \(F_0 \succeq_{\T {lr}}
  F_1\), if
  \[
    \frac {f_0(x)} {f_1(x)}
  \]
  is non-decreasing \(\mu\)-a.e., where, without loss of generality, we define
  the ratio to be \(\infty\) when \(f_1(x) = 0\). For random variables \(X\) and
  \(Y\), we write \(X \succeq_{\T {lr}} Y\) if \(F_X \succeq_{\T {lr}} F_Y\).
\end{defn}

The MLRP has many useful consequences and appears widely in the literature. In
addition to implying stochastic dominance (see Theorem~1.C.1
in~\citeauthor{shaked2007stochastic} and below), the MLRP holds for many
familiar parametric families used to model risk distributions. For example,
\begin{itemize}
  \item If \(X \sim \D {Beta}(\alpha_0, \beta_0)\) and \(Y \sim \D
    {Beta}(\alpha_1, \beta_1)\), then \(X \succeq_{\T {lr}} Y\) if \(\alpha_0
    \geq \alpha_1\) and \(\beta_0 \leq \beta_1\);
  \item If \(X \sim \C N(\mu_0, \sigma^2)\) and \(Y \sim \C N(\mu_1,
    \sigma^2)\), then \(X \succeq_{\T {lr}} Y\) if \(\mu_0 \geq \mu_1\);
  \item If \(X \sim \D {Gamma}(\alpha_0, \beta_0)\) and \(Y \sim \D
    {Gamma}(\alpha_1, \beta_1)\), then \(X \succeq Y\) if \(\alpha_0 \geq
    \alpha_1\) and \(\beta_0 \leq \beta_1\);
  \item If \(X \sim \D {Binom}(n, p_0)\) and \(Y \sim \D {Binom}(n, p_1)\), then
    \(X \succeq_{\T {lr}} Y\) if \(p_0 \geq p_1\).
\end{itemize}
We omit the routine verification of these facts. The MLRP is also preserved by
monotone transformations: if \(X \succeq_{\T {lr}} Y\), then \(g(X) \succeq_{\T
{lr}} g(Y)\) for monotonically increasing functions \(g(x)\); see Theorem~2.5
in~\citet{keilson1982uniform}. Thus, the MLRP also holds, e.g., with the
analogous statements for log-normal, log-gamma, and logit-normal distributions.
Like stochastic dominance, the MLRP is preserved when taking sums of random
variables, so long as the densities are log-concave; see Lemma~1.1
in~\citet{shanthikumar1986preservation}. For further examples of conditions
under which the MLRP holds, see Section~1.C of~\citet{shaked2007stochastic}. In
summary, the range of parametric families for which the MLRP holds is
considerable.

In addition, the MLRP has many different equivalent formulations, some of which
we note below.

\begin{lem}%
\label{lem:mlrp}
  Let \(X\) and \(Y\) be arbitrary random variables. Let \(Z \sim \T
  {Bernoulli}(p)\) for \(p > 0\), and let \(W \defeq Z \cdot X + (1 - Z) \cdot
  Y\). Then, the following are equivalent:
  \begin{enumerate}
    \item Monotone Likelihood Ratio Property (MLRP): The distributions of \(X\)
      and \(Y\) have the MLRP, i.e., \(X \succeq_{\T {lr}} Y\);
    \item Uniform Conditional Stochastic Dominance (UCSD): For all events \(E\)
      such that \(\Pr(X \in E) > 0\) and \(\Pr(Y \in E) > 0\),
      \[
          X \mid E \succeq_1 Y \mid E.
      \]
    \item Klinostochastic Dominance (KSD):\footnote{%
          This is the \(\B B^1\)-ordering in the notation
          of~\citet{ruschendorf1991conditional}.
      }
      \begin{equation}
      \label{eq:ksd}
        \frac {\int_{\bar {\B R}} \bb 1(x \leq t) \cdot h(x) \, dF_X(x)}
        {\int_{\bar {\B R}} h(x) \, dF_X(x)} \leq \frac {\int_{\bar {\B R}} \bb
        1(y \leq t) \cdot h(y) \, dF_Y(y)} {\int_{\bar {\B R}} h(y) \, dF_Y(y)},
      \end{equation}
      for any \(h(x)\) such that \(h(X)\) and \(h(Y)\) are a.s.\ greater than
      zero and such that
      \[
        0 < \B E[h(X)] < \infty, \qquad 0 < \B E[h(Y)] < \infty.
      \]
    \item Monotone conditional probability (MCP): The conditional probability
      \[
        g(w) \defeq \Pr(Z = 1 \mid W = w)
      \]
      is monotone \(W\)-a.s.
  \end{enumerate}
\end{lem}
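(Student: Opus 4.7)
My plan is to establish the four conditions as equivalent to MLRP in turn, working throughout with the Radon-Nikodym densities $f_0, f_1$ of $F_X, F_Y$ against their sum measure $\mu$ as in Definition~\ref{def:mlrp}.

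For MLRP $\Rightarrow$ UCSD, I would fix an event $E$ with positive mass under both $X$ and $Y$ and a threshold $t$, split $E = E_- \cup E_+$ with $E_- = E \cap (-\infty, t]$ and $E_+ = E \cap (t, \infty)$, and note that the desired inequality $\Pr(X \leq t \mid X \in E) \leq \Pr(Y \leq t \mid Y \in E)$ rearranges to the cross-multiplication $\Pr(X \in E_-)\Pr(Y \in E_+) \leq \Pr(Y \in E_-)\Pr(X \in E_+)$. This follows by integrating over $E_- \times E_+$ the pointwise inequality $f_0(x) f_1(y) \leq f_0(y) f_1(x)$, which holds for $x \leq y$ as an immediate rearrangement of MLRP. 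For the reverse direction UCSD $\Rightarrow$ MLRP, I would specialize $E$ to pairs of small intervals $[a, a + \epsilon] \cup [b, b + \epsilon]$ with $a + \epsilon \leq b$, apply UCSD at $t = a + \epsilon$, and differentiate with respect to $\mu$ (via a Besicovitch-style differentiation argument) to recover the pointwise bound $f_0(a)/f_1(a) \leq f_0(b)/f_1(b)$ at $\mu$-a.e.\ pair of points.

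The equivalence MLRP $\Leftrightarrow$ MCP follows the same Bayes-rule argument already presented in Section~\ref{sec:simple}: since $W \mid Z = 1$ is distributed as $F_X$ and $W \mid Z = 0$ as $F_Y$, Bayes' rule yields $g(w) = p f_X(w) / (p f_X(w) + (1 - p) f_Y(w))$ $\mu$-a.e., and composing with the strictly increasing map $\varphi(q) = \frac{1 - p}{p} \cdot \frac{q}{1 - q}$ recovers the likelihood ratio $f_X(w)/f_Y(w)$, so the two monotonicity properties coincide. For MLRP $\Leftrightarrow$ KSD, I observe that $h$-weighting preserves the likelihood-ratio ordering, because the weight cancels in the ratio of densities of the tilted laws $d\tilde F_X \propto h \, dF_X$ and $d\tilde F_Y \propto h \, dF_Y$; hence $\tilde F_X$ and $\tilde F_Y$ also satisfy MLRP, and the already-proved implication MLRP $\Rightarrow$ stochastic dominance (Theorem~1.C.1 of \citet{shaked2007stochastic}) applied to the tilted laws is exactly \eqref{eq:ksd}. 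For the converse, I would apply KSD with $h = \bb 1_E + \epsilon$ and let $\epsilon \downarrow 0$, recovering UCSD on $E$ by dominated convergence, and then invoke UCSD $\Rightarrow$ MLRP.

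The main technical obstacle I expect is the UCSD $\Rightarrow$ MLRP step, where one must accommodate the possibility that $F_X$ and $F_Y$ have atoms or singular continuous parts: the differentiation argument has to be carried out relative to the general sum measure $\mu$ rather than Lebesgue measure, and one has to track carefully the convention that $f_0/f_1 = \infty$ on the set where $f_1 = 0$. Once this direction is pinned down, the remaining implications reduce to essentially algebraic manipulations of densities and a standard dominated-convergence argument.
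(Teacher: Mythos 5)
Your plan is sound, and the parts you argue in full are correct, but your route differs from the paper's in one structural respect: the paper does \emph{not} prove the MLRP\(\Leftrightarrow\)UCSD and MLRP\(\Leftrightarrow\)KSD equivalences itself---it cites Theorems~1.1 and~1.3 of \citet{whitt1980uniform} and Theorems~7 and~8 of \citet{ruschendorf1991conditional} (noting only that those proofs go through without a.s.\ finiteness), and supplies an original argument only for MLRP\(\Leftrightarrow\)MCP, which it carries out with Radon--Nikodym derivatives taken against \(F_W\) rather than \(\mu\); your Bayes-rule argument for that step is essentially identical to the paper's. Your self-contained treatment of the remaining equivalences buys independence from the literature, and several pieces are clean and complete as stated: MLRP\(\Rightarrow\)UCSD via integrating \(f_0(x)f_1(y)\le f_0(y)f_1(x)\) over \(E_-\times E_+\) (the product form correctly sidesteps division by zero and the \(\infty\) convention), MLRP\(\Rightarrow\)KSD via the observation that tilting multiplies the likelihood ratio by the positive constant \(\B E[h(Y)]/\B E[h(X)]\) on \(\{h>0\}\), and KSD\(\Rightarrow\)UCSD via \(h=\bb 1_E+\epsilon\). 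The one place where real work remains is exactly the step you flag: UCSD\(\Rightarrow\)MLRP. The differentiation argument does go through, but you must (i) extract a single \(\mu\)-null set from the Besicovitch differentiation theorem valid for all pairs simultaneously, (ii) verify that the two-interval event \(E\) actually has positive mass under \emph{both} laws so that UCSD applies (when it does not, you must argue separately, e.g.\ that \(f_1(a)=0\) forces \(f_1(b)=0\) for \(\mu\)-a.e.\ \(b>a\), which follows from the cross-multiplied form of UCSD in the limit), and (iii) handle atoms of \(\mu\). None of this is a conceptual obstacle---the cross-multiplied inequality \(\Pr(X\in[a,a+\epsilon])\Pr(Y\in[b,b+\epsilon])\le\Pr(Y\in[a,a+\epsilon])\Pr(X\in[b,b+\epsilon])\), divided by \(\mu([a,a+\epsilon])\mu([b,b+\epsilon])\), passes to the limit correctly in every case---but it is precisely the content the paper outsources to \citeauthor{whitt1980uniform}. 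One small point worth recording in either treatment: the equivalence with MCP requires \(0<p<1\), not merely \(p>0\) as stated, since at \(p=1\) the map \(q\mapsto\frac{1-p}{p}\cdot\frac{q}{1-q}\) degenerates and \(g(w)\equiv 1\) is monotone regardless of whether the MLRP holds.
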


\begin{proof}
  The equivalence between the MLRP and UCSD was originally given as Theorems~1.1
  and~1.3 in~\citet{whitt1980uniform}---though, see the note to Theorem~8
  in~\citet{ruschendorf1991conditional}. The equivalence between the MLRP and
  KSD was first shown in Theorems~7 and~8 in~\citet{ruschendorf1991conditional}.
  (As above, the statements in these references assume that \(X\) and \(Y\) are
  a.s.\ finite, but the proofs do not use this fact.)

  We now show the equivalence between MCP and the MLRP.\@ The proof closely
  follows the sketch of the equivalence for distributions with densities in the
  simplified proof of Proposition~\ref{prop:simple} above. Let \(p = \Pr(Z =
  1)\) and note that
  \[
    F_W = p \cdot F_{W \mid Z = 1} + (1 - p) \cdot F_{W \mid Z = 0} = p \cdot
    F_X + (1 - p) \cdot F_Y.
  \]
  In particular, it follows from the linearity of the Radon-Nikodym derivative
  and the fact that the Radon-Nikodym derivative of a measure with respect to
  itself is unity that
  \begin{equation}%
  \label{eq:decomp}
    1 - p \cdot \frac {d F_X} {d F_W} = (1 - p) \cdot \frac {d F_Y} {d F_W}.
  \end{equation}
  Next, consider the measure \(\nu : E \mapsto \Pr(W \in E \mid Z = 1)\)---i.e.,
  the distribution of \(X\)---and observe that
  \[
    p \cdot \nu[E] = \Pr(W \in E, Z = 1) = \B E[\Pr(Z = 1, W \in E \mid W)] = \B
    E[\Pr(Z = 1 \mid W) \cdot \bb 1(W \in E)].
  \]
  Here the first equality follows from the definition of conditional
  probability, the second from the law of iterated expectation, and the third
  from the \(\sigma(W)\)-measurability of \(\bb 1(W \in E)\). Now, by the
  Doob-Dynkin lemma, there exists a \(\sigma(W)\)-measurable function \(g(w)\)
  such that \(\Pr(Z = 1 \mid W) = g(W)\) a.s. Therefore, the previous expression
  equals
  \[
    \B E[g(W) \cdot \bb 1(W \in E)] = \int_{E} g(w) \, d F_W(w),
  \]
  where the last equality is the fundamental property of pushforward measures;
  see, e.g., Theorem 3.6.1 in~\citet{bogachev2007measure}. In particular, by the
  uniqueness of the Radon-Nikodym derivative,
  \[
    g(w) = \Pr(Z = 1 \mid W = w) = p \cdot \frac {d F_X} {d F_W}(w) \qquad
    \text{\(W\)-a.s.}
  \]

  Consider the function \(h(q) : [0, 1] \to [0, \infty]\) given by
  \[
    h(q) \defeq \frac {1 - p} p \cdot \frac q {1 - q},
  \]
  where, without loss of generality, we set \(h(1) = \infty\). Now,
  \[
    h \circ g = h \left( p \cdot \frac {d F_X} {d F_W} \right) = \frac {1 - p} p
    \cdot \frac {p \cdot \frac {d F_X} {d F_W}} {1 - p \cdot \frac {d F_X} {d
    F_W}} = \frac {1 - p} p \cdot \frac {p \cdot \frac {d F_X} {d F_W}} {(1 - p)
    \cdot \frac {d F_Y} {d F_W}} = \frac {\frac {d F_X} {d F_W}} {\frac {d F_Y}
    {d F_W}},
  \]
  which is the likelihood ratio. The second equality here follows from
  Eq.~\eqref{eq:decomp}. However, since \(h(q)\) is monotonically increasing,
  \(h \circ g\) is non-decreasing if and only if \(g(w)\) is. Therefore MCP
  holds if and only if the MLRP holds.
\end{proof}

Of these characterizations, the most important for proving
Theorem~\ref{thm:general} is klinostochastic dominance (KSD). For instance, we
see by taking \(h(x) = 1\) in Eq.~\eqref{eq:ksd} that KSD (and hence the MLRP)
implies stochastic dominance. KSD is closely related to the notion of tilted
distributions. For a random variable \(X\) and an \(X\)-a.s.\ non-negative
function \(h(x)\) such that \(\B E[h(X)]\) is finite and positive, we denote by
\(h \circlearrowleft X\) or \(h \circlearrowleft F_X\) the \emph{tilt} of the
distribution of \(X\) by the weight function \(h\), i.e., the probability
distribution with CDF
\begin{equation}%
\label{eq:tilt}
  F_{h \circlearrowleft X}(t) = \frac {\int_{\bar {\B R}} \bb 1(x \leq t) \cdot
  h(x) \, d F_X} {\int_{\bar {\B R}} h(x) \, d F_X} = \frac {\B E[\bb 1(X \leq
  t) \cdot h(X)]} {\B E[h(X)]}.
\end{equation}

KSD simply means that the stochastic ordering of two distributions is invariant
under tilts: by Lemma~\ref{lem:mlrp}, \(X \succeq_{\T {lr}} Y\) if and only if
\[
  h \circlearrowleft X \succeq_1 h \circlearrowleft Y
\]
for any appropriate weight function \(h(x)\) (i.e., \(h(x)\) such that \(h(X)\)
and \(h(Y)\) are a.s.\ non-negative and \(\B E[h(X)]\) and \(\B E[h(Y)]\) are
finite and positive).

\subsection{Proof of correctness of the robust outcome test}%
\label{app:proof}

KSD is the key ingredient needed to prove our main result: under appropriate
ordering assumptions on the risk distributions and risk-decision curves, the
robust outcome test is correct.

\begin{proof}
  We prove the contrapositive. That is, assume instead that
  \begin{equation}
  \label{eq:assume-contradiction}
    \Pr(d_0(U) < d_1(U)) > 0 \qquad \text{or} \quad d_0(u) = d_1(u) \quad
    \text{\(U\)-a.s.}
  \end{equation}
  Note that by our second ordering assumption, \(d_0(u)\) and \(d_1(u)\)
  generate distributions with the MLRP.\@ Since the MLRP implies stochastic
  dominance, it follows that either \(d_0(u) \leq d_1(u)\) \(U\)-a.s., or
  \(d_1(u) \leq d_0(u)\) \(U\)-a.s. By Eq.~\eqref{eq:assume-contradiction}, it
  follows that \(d_0(u) \leq d_1(u)\).

  Using the fact that \(d_0(u) \leq d_1(u)\), as in
  Proposition~\ref{prop:simple}, we will show that either:
  \begin{enumerate}
    \item The decision rate is at least as large for group \(G = 1\) as for
      group \(G = 0\), i.e.,
      \begin{equation}
      \label{eq:dr}
        \Pr(D = 1 \mid G = 0) \leq \Pr(D = 1 \mid G = 1);
      \end{equation}
    \item The outcome rate is no bigger for group \(G = 1\) than for group \(G =
      0\), i.e.,
      \begin{equation}
      \label{eq:or}
        \B E[U \mid D = 1, G = 0] \geq \B E[U \mid D = 1, G = 1].
      \end{equation}
  \end{enumerate}
  Toward that end, let \(U_0 \sim U \mid G = 0\) and \(U_1 \sim U \mid G = 1\)
  be the conditional risk distributions, and let \(d_0(u)\) and \(d_1(u)\)
  generate distributions \(H_0\) and \(H_1\) with the MLRP.\@ Then, there are
  two possibilities, which we will treat separately: either \(U_1 \succeq_{\T
  {lr}} U_0\), or \(U_0 \succeq_{\T {lr}} U_1\).

  \medskip
  \begin{case}[\(U_1 \succeq_{\T {lr}} U_0\)]
    Suppose first that \(U_1 \succeq_{\T {lr}} U_0\). Then, in particular, \(U_1
    \succeq_1 U_0\). Since \(d_0(u) \leq d_1(u)\), we have that
    \begin{align*}
      \Pr(D = 1 \mid G = 0)
        &= \B E[d_0(U) \mid G = 0] \\
        &\leq \B E[d_0(U) \mid G = 1] \\
        &\leq \B E[d_1(U) \mid G = 1] \\
        &= \Pr(D = 1 \mid G = 1),
    \end{align*}
    i.e., the decision rate is at least as high for group \(G = 1\) as for group
    \(G = 0\), which is Eq.~\eqref{eq:dr}. Here, the equalities follow from the
    law of iterated expectations and the definition of \(d_g(u)\) in
    Eq.~\eqref{eq:risk-decision}. The first inequality follows from stochastic
    dominance and the fact that \(d_0(u)\) is non-decreasing---because it
    exhibits bounded rationality---and the second inequality from the assumption
    that \(d_0(u) \leq d_1(u)\).
  \end{case}

  \medskip
  \begin{case}[\(U_0 \succeq_{\T {lr}} U_1\)]
    Next, suppose that \(U_0 \succeq_{\T {lr}} U_1\). The proof in this case is
    similar, although more delicate. We will use KSD to show that for a fixed
    risk-decision curve, the outcome rate is higher for group \(G = 0\) than
    group \(G = 1\), and then we will use KSD again in a different way to show
    that for a fixed group \(G = g\), the risk-decision curve \(d_0(u)\) results
    in a higher outcome rate than the risk-decision curve \(d_1(u)\).

    In particular, we will need to consider the distributions \(d_{g'}
    \circlearrowleft U_g\) for all \(g, g' \in \{0, 1\}\), so we begin by
    verifying that these distributions are well-defined and have finite
    expectations.

    \begin{detail}[%
      Well-definedness and finite expectation of \(d_{g'} \circlearrowleft
      U_g\)%
    ]
      Let \(d(u)\) be any risk-decision curve. Then \(\B E[d(U) \mid G = g]\) is
      the corresponding decision rate for group \(G = g\). In order for \(d
      \circlearrowleft U_g\) to be well-defined, the definition of a tilted
      distribution in Eq.~\eqref{eq:tilt} requires that
      \[
        0 < \int_{\B R} d(u) \, d F_{U_g} = \B E[d(u) \mid G = g],
      \]
      i.e., that the decision rate is positive. (For avoidance of doubt, by
      \(\int \mathrel{\cdots} \, d F_X\), we mean throughout the integral taken
      with respect to the pushforward measure \(E \mapsto \Pr(X \in E)\) on
      \(\bar {\B R}\), not the Riemann-Stieltjes integral with respect to the
      integrator \(F_X\).) When \(d(u) = d_{g'}(u)\), \(\B E[d(U) \mid G = g]\)
      is in fact positive by assumption. To see this, note that when \(g = g'\),
      \(\B E [d_{g'}(U) \mid G = g] > 0\) by the non-triviality assumption in
      Eq.~\eqref{eq:assume-non-zero}. On the other hand, when \(g \neq g'\),
      observe that the overlap assumption in Eq.~\eqref{eq:assume-overlap}
      implies that \(U_0\) and \(U_1\) are mutually absolutely continuous. Since
      \(\B E[d_{g'}(U) \mid G = g'] > 0\), there must exist some set \(E\) and
      \(\epsilon > 0\) such that \(d_{g'}(u) > \epsilon\) for all \(u \in E\)
      and \(\Pr(U \in E \mid G = g') > 0\). Therefore, in particular,
      \begin{equation}
      \label{eq:prob-exists}
        \B E[d_{g'}(U) \mid G = g] \geq \epsilon \cdot \Pr(U \in E \mid G = g) >
        0.
      \end{equation}
      Here, the first inequality follows from the fact that \(d_g(u) \geq
      \epsilon \cdot \bb 1(u \in E)\). The second follows from mutual absolute
      continuity and the fact that \(\Pr(U \in E \mid G = g') > 0\). In short,
      the tilted distributions \(d_{g'} \circlearrowleft U_g\) are well-defined
      for all \(g, g' \in \{0, 1\}\).

      Next, we verify that the expectations of the tilted distributions \(d_{g'}
      \circlearrowleft U_g\) are all finite. These expectations are the outcome
      rates of the risk-decision curve \(d_{g'}(u)\) over the conditional risk
      distribution of group \(G = g\) for all \(g, g' \in \{0, 1\}\). Let
      \(d(u)\) be an arbitrary risk-decision curve, and assume that the
      corresponding decision rate \(\B E[d(u) \mid G = g]\) is positive. Then,
      the absolute value of the outcome rate for group \(G = g\) can be bounded
      as follows:
      \begin{equation}
      \label{eq:exp-exists}
        \left| \frac {\int_{\B R} u \cdot d(u) \, d F_{U_g}} {\int_{\B R} d(u)
        \, d F_{U_g}} \right| \leq \frac {\int_{\B R} |u| \, d F_{U_g}}
        {\int_{\B R} d(u) \, d F_{U_g}} = \frac {\B E[|U| \mid G = g]} {\B
        E[d(U) \mid G = g]} \leq \frac {\B E[|U|]} {\Pr(G = g) \cdot \B E[d(U)
        \mid G = g]}.
      \end{equation}
      Here the first inequality follows from the fact that \(0 \leq d(u) \leq
      1\) \(U\)-a.s.\ combined with the triangle inequality for integrals. The
      second inequality follows from the fact that for any non-negative random
      variable \(X\) and positive probability event \(E\), \(\B E[X \mid E] \leq
      \B E[X] \mathop/ \Pr(E)\). We note that \(\B E[|U|]\) is finite by the
      assumption in Eq.~\eqref{eq:assume-finite} and \(\Pr(G = g) > 0\) by the
      assumption in Eq.~\eqref{eq:assume-non-empty}. In particular, when \(d(u)
      = d_{g'}(u)\), \(\B E[d(U) \mid G = g]\) is positive by
      Eq.~\eqref{eq:prob-exists}. Therefore the outcome rates are finite.
    \end{detail}

    \begin{detail}[%
      Comparison of outcome rates for \(d_0(u)\) and \(d_1(u)\)%
    ]
      Having verified that the decision rates are positive and the outcome rates
      are finite for groups \(G = 0\) and \(G = 1\) under both risk-decision
      curves \(d_0(u)\) and \(d_1(u)\), we move on to comparing these rates.

      Since \(U_0 \succeq_{\T {lr}} U_1\), we have by KSD in
      Lemma~\ref{lem:mlrp} that \(d_1 \circlearrowleft U_0 \succeq_1 d_1
      \circlearrowleft U_1\). It follows from Lemma~\ref{lem:sd-exp} that the
      former outcome rate is greater than or equal to the latter outcome rate.
      To see this, note that \(d_1(u)\) exhibits bounded rationality and is
      therefore non-decreasing. Additionally, the outcome rate
      \[
        \frac {\int_{\B R} u \cdot d_1(u) \, d F_{U_g}} {\int_{\B R} d_1(u) \, d
        F_{U_g}}
      \]
      is finite for \(g \in \{0, 1\}\) by Eq.~\eqref{eq:exp-exists}. Applying
      Lemma~\ref{lem:sd-exp}, we therefore have that
      \begin{equation}
      \label{eq:hit-rate}
        \B E[U \mid D = 1, G = 1] = \frac {\int_{\B R} u \cdot d_1(u) \, d
        F_{U_1}} {\int_{\B R} d_1(u) \, d F_{U_1}} \leq \frac {\int_{\B R} u
        \cdot d_1(u) \, d F_{U_0}} {\int_{\B R} d_1(u) \, d F_{U_0}}.
      \end{equation}

      Next, we show that switching the risk-decision curve from \(d_1(u)\) to
      \(d_0(u)\) can only increase the outcome rate. To see this, we proceed in
      four steps. First, note that by the definition of \(H_g\), we have that
      \[
        \frac {\int_{\B R} u \cdot d_g(u) \, d F_{U_0}(u)} {\int_{\B R} d_g(u)
        \, d F_{U_0}(u)} = \frac {\int_{\B R} u \cdot \left[ \int_{\bar {\B R}}
        \bb 1(s \leq u) \, d F_{H_g} (s) \right] \, d F_{U_0}(u)} {\int_{\B R}
        \left[ \int_{\bar {\B R}} \bb 1(s \leq u) \, d F_{H_g}(s) \right] \, d
        F_{U_0}(u)}.
      \]
      Second, by Eq.~\eqref{eq:exp-exists}, we can apply the Fubini-Tonelli
      theorem to the latter expression, yielding
      \[
        \frac {\int_{\B R} u \cdot d_g(u) \, d F_{U_0}(u)} {\int_{\B R} d_g(u)
        \, d F_{U_0}(u)} = \frac {\int_{\bar {\B R}} \left[\int_{\B R} u \cdot
        \bb 1(u \geq s) \, d F_{U_0}(u) \right] \, d F_{H_g}(s)} {\int_{\bar {\B
        R}} \left[ \int_{\B R} \bb 1(u \geq s) \, d F_{U_0}(u) \right] \, d
        F_{H_g}(s)}.
      \]
      Third, multiplying and dividing the outer integrand in the numerator on
      the right-hand side by \(\int_s^\infty 1 \, d F_{U_0}(u)\) yields that
      \begin{equation}
      \label{eq:trick}
        \frac {\int_{\B R} u \cdot d_g(u) \, d F_{U_0}(u)} {\int_{\B R} d_g(u)
        \, d F_{U_0}(u)} = \frac {\displaystyle \int_{\bar {\B R}} \left[ \frac
        {\int_{\B R} u \cdot \bb 1(u \geq s) \, d F_{U_0}(u)} {\int_{\B R} \bb
        1(u \geq s) \, d F_{U_0}(u)} \right] \cdot \left[ \textstyle \int_{\B R}
        \bb 1(u \geq s) \, d F_{U_0}(u) \right] \, d F_{H_g}(s)} {\int_{\bar {\B
        R}} \left[ \int_{\B R} \bb 1(u \geq s) \, d F_{U_0}(u) \right] \, d
        F_{H_g}(s)}.
      \end{equation}
      Fourth and finally, from the form of the integral, we see that the
      right-hand side of Eq.~\eqref{eq:trick} is an expectation. In particular,
      it is the expectation of a non-decreasing function \(f(s)\):
      \[
        f(s) \defeq \frac {\int_{\B R} u \cdot \bb 1(u \geq s) \, d F_{U_0}(u)}
        {\int_{\B R} \bb 1(u \geq s) \, d F_{U_0}(u)} =\B E[U \mid U \geq s, G =
        0].
      \]
      The expectation in Eq.~\eqref{eq:trick} is taken with respect to the
      distribution \(h \circlearrowleft H_g\), where
      \[
        h(s) \defeq \int_{\B R} \bb 1(u \geq s) \, d F_{U_0}(u) = \Pr(U \geq s
        \mid G = 0).
      \]
      Since by Eq.~\eqref{eq:trick} the expectation of \(f(s)\) with respect to
      the distribution \(h \circlearrowleft H_g\) equals
      \[
        \frac {\int_{\B R} u \cdot d_g(u) \, d F_{U_0}(u)} {\int_{\B R} d_g(u)
        \, d F_{U_0}(u)},
      \]
      the expectation is consequently finite by Eq.~\eqref{eq:exp-exists}.

      Putting these pieces together, since \(H_0 \succeq_{\T {lr}} H_1\) by
      hypothesis, it follows that \(h \circlearrowleft H_0 \succeq_1 h
      \circlearrowleft H_1\). Thus, since \(f(s)\) is non-decreasing and has
      finite expectation, it follows by Lemma~\ref{lem:sd-exp} that its
      expectation with respect to \(h \circlearrowleft H_0\) is at least as
      large as its expectation with respect to \(h \circlearrowleft H_1\). In
      other words, the expression in Eq.~\eqref{eq:trick} with \(g = 0\) is
      greater than or equal to the expression with \(g = 1\), i.e.,
      \[
        \frac {\int_{\B R} u \cdot d_1(u) \, d F_{U_0}} {\int_{\B R} d_1(u) \, d
        F_{U_0}} \leq \frac {\int_{\B R} u \cdot d_0(u) \, d F_{U_0}} {\int_{\B
        R} d_0(u) \, d F_{U_0}} = \B E[U \mid D = 1, G = 0].
      \]
      Combining this with Eq.~\eqref{eq:hit-rate} gives that the outcome rate is
      at least as small for group \(G = 1\) as group \(G = 0\), i.e.,
      \[
        \B E[U \mid D = 1, G = 1] \leq \B E[U \mid D = 1, G = 0],
      \]
      which is Eq.~\eqref{eq:or}.
    \end{detail}
  \end{case}
\end{proof}

\subsection{A Note on Sampling Error}%
\label{app:sampling}

To clarify our main conceptual points, we have focused throughout on the
infinite data regime, where decision and outcome rates are estimated with
negligible error. In practice, however, sampling error can be an important
concern.

For notational simplicity, let \(\D {DR}_g\) and \(\D {OR}_g\) denote the true
decision and outcome rates for group \(g\)---i.e., \(\Pr(D = 1 \mid G = g)\) and
\(\B E[U \mid D = 1, G = g]\). Let
\[
  n_g \defeq \sum_{i=1}^n \bb 1(G_i = g) \qquad \text{and} \quad n_{g,d} \defeq
  \sum_{i=1}^n \bb 1(G_i = g, D_i = d)
\]
denote the number of individuals in group \(G = g\) and, respectively, the
number of observations both in group \(G = g\) and receiving decision \(D = d\).
Let \(\widehat {\D {DR}}_g\) and \(\widehat {\D {OR}}_g\) for \(g \in \{0, 1\}\)
denote the empirical decision and outcome rates, i.e.,
\[
  \widehat {\D {DR}}_g \defeq \frac {n_{g,1}} {n_g}, \qquad \widehat {\D {OR}}_g
  \defeq \frac {\sum_{i=1}^n U_i \cdot \bb 1(G_i = g, D_i = 1)} {n_{g,1}}.
\]
Set
\[
    \widehat{\D {SVar}}_{U,g} \defeq \frac {\sum_{i=1}^n \bb 1(G_i = g, D_i = 1)
    \cdot (U_i - \widehat {\D {OR}}_g)} {n_g - 1}.
\]
Finally, let \(\Delta_{\D {DR}} \defeq \D {DR}_1 - \D {DR}_0\) denote the true
difference in decision rates, with \(\Delta_{\D {OR}}\), \(\hat \Delta_{\D
{DR}}\), and \(\hat \Delta_{\D {OR}}\) defined analogously.

To quantify uncertainty in \(\hat \Delta_{\D {DR}}\) and \(\hat \Delta_{\D
{OR}}\), first observe that \(\widehat {\D {DR}}_0\), \(\widehat {\D {DR}}_1\),
\(\widehat {\D {OR}}_0\), and \(\widehat {\D {OR}}_1\) are approximately
independent: dependency arises only because \(n_0\), \(n_1\), \(n_{0,1}\), and
\(n_{1,1}\) are not fixed.\footnote{%
  In particular, using the delta method and central limit theorem, it is
  straightforward to show that as \(n \to \infty\), \((\hat \Delta_{\D {DR}} -
  \Delta_{\D {DR}}) \mathop/ \widehat{\D {SErr}}_{\Delta_{\D {DR}}}\) and
  \((\hat \Delta_{\D {OR}} - \Delta_{\D {OR}}) \mathop/ \widehat{\D
  {SErr}}_{\Delta_{\D {OR}}}\) converge in distribution to independent standard
  normals.
}
As a result, we can estimate the standard errors of \(\hat \Delta_{\D {DR}}\)
and \(\hat \Delta_{\D {OR}}\) as follows:
\[
  \widehat {\D {SErr}}_{\Delta_{\D {DR}}} \defeq \sqrt {\frac {\widehat {\D
  {DR}}_0 \cdot \left( 1 - \widehat {\D {DR}}_0 \right)} {n_0} + \frac {\widehat
  {\D {DR}}_1 \cdot \left( 1 - \widehat {\D {DR}}_1 \right)} {n_1}}, \qquad
  \widehat{\D {SErr}}_{\Delta_{\D {OR}}} \defeq \sqrt {\frac {\widehat{\D
  {SVar}}_{U,0}} {n_{0,1}} + \frac {\widehat{\D {SVar}}_{U,1}} {n_{1,1}}}.
\]
From this approximate independence, it also follows immediately that the most
compact central confidence region is given by
\begin{equation}%
\label{eq:ci}
  C \left( \hat \Delta_{\D {DR}}, \hat \Delta_{\D {OR}}; \alpha \right) \defeq
  \left\{ (\Delta_{\D {DR}}, \Delta_{\D {OR}}) : \left( \frac {\Delta_{\D {DR}}
  - \hat \Delta_{\D {DR}}} {\D {SErr}_{\Delta_{\D {DR}}}} \right)^2 + \left(
  \frac {\Delta_{\D {OR}} - \hat \Delta_{\D {OR}}} {\D {SErr}_{\Delta_{\D
  {OR}}}} \right)^2 \leq x_{1 - \alpha} \right\},
\end{equation}
where \(x_{1 - \alpha}\) is the \((1 - \alpha)\)-quantile of the \(\chi^2(2)\)
distribution. Equivalently, \(C \left( \hat \Delta_{\D {DR}}, \hat \Delta_{\D
{OR}}; \alpha \right)\) corresponds to thresholding the joint density of
independent \(\C N \left( \hat \Delta_{\D {DR}}, \widehat {\D
{SErr}}_{\Delta_{\D {DR}}} \right)\) and \(\C N \left( \hat \Delta_{\D {OR}},
\widehat {\D {SErr}}_{\Delta_{\D {OR}}} \right)\) distributions above an
appropriate threshold \(t_\alpha\) to obtain the region of highest density. This
results in an elliptical confidence region for \(\Delta_{\D {DR}}\) and
\(\Delta_{\D {OR}}\) centered at \(\left( \hat \Delta_{\D {DR}}, \hat \Delta_{\D
{OR}} \right)\), with the major and minor axes determined by the estimated
standard errors of the sample difference in decision and outcome rates.
Asymptotically valid hypothesis tests can be straightforwardly obtained by, for
example, taking the maximum \(p\)-value of one-tailed tests of \(\Delta_{\D
{DR}} \leq 0\) and \(\Delta_{\D {OR}} \geq 0\).

\section{Data and Risk Estimation}%
\label{app:data}

Our empirical results are based on data drawn from five sources:
\begin{itemize}
  \item \textbf{Police Stops}: Administrative data gathered under the California
    Racial Identity and Profiling Act~\citep{RipaBoardReport2024};
  \item \textbf{Lending}: Lending data drawn from applicants to a large online
    financial technology platform;
  \item \textbf{Recidivism}: COMPAS recidivism prediction scores from Broward
    County, Florida~\citep{angwin2022machine};
  \item \textbf{Contraband}: Administrative data from the New York Police
    Department Stop, Question, and Frisk program~\citep{floyd};
  \item \textbf{Bar passage}: Law school admissions and graduation data from the
    Law School Admissions Council's Longitudinal Bar Passage
    Study~\citep{wightman1998lsac}.
\end{itemize}
Data sources, pre-processing steps, risk model specifications, exclusion
criteria, and other details are given below.

\begin{figure}
  \begin{center}
    \includegraphics{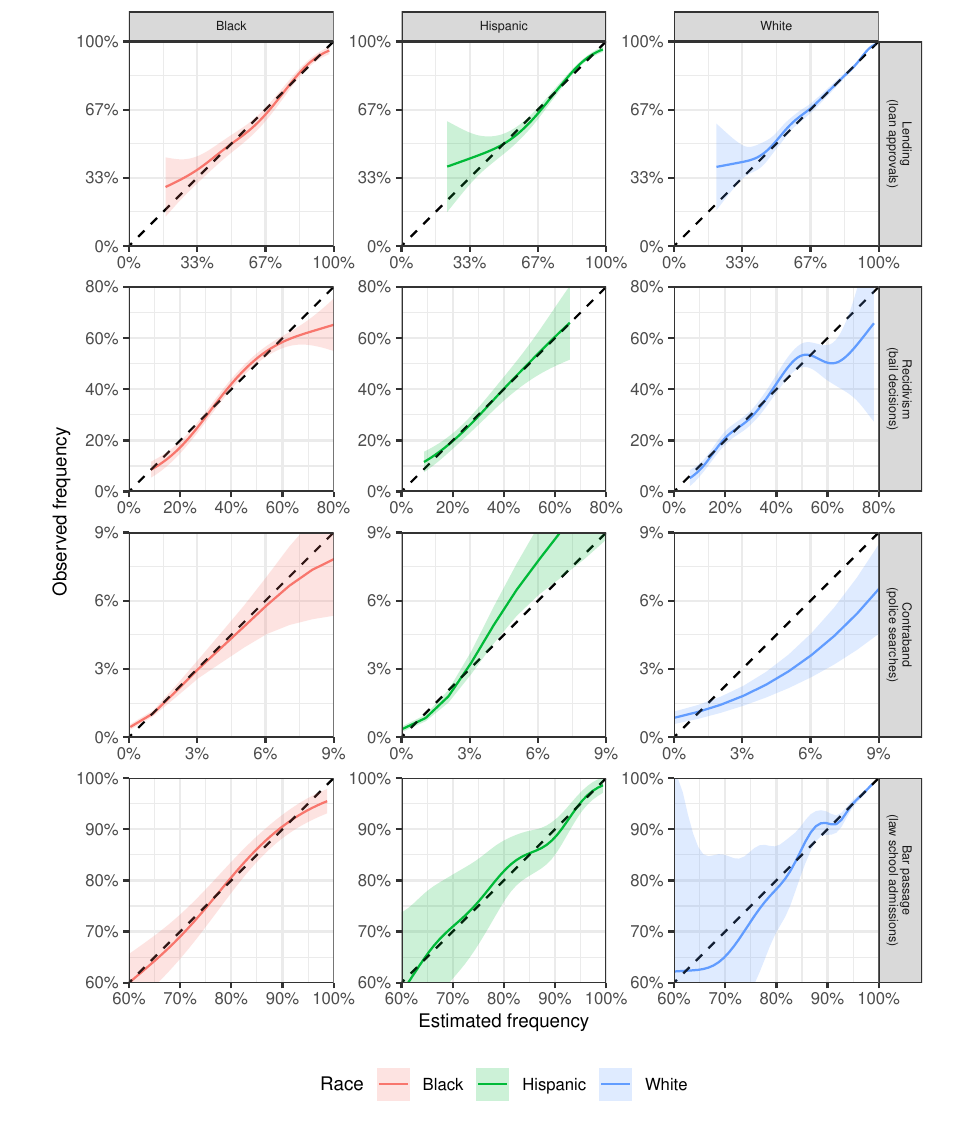}
  \end{center}
  \caption{\emph{%
    Calibration curves for the risk models used in our analysis. The \(x\)-axis
    indicates the estimated proportion of cases in which the predicted event
    (viz., loan repayment, recidivism, weapon possession, or bar passage)
    occurs; the \(y\)-axis indicates the actual proportion of cases in which the
    event occurs. The diagonal line indicates perfect calibration. The smooth
    curves represent logistic regression GAM models fit to the data using a thin
    plate regression spline basis. The shaded regions indicate 95\% confidence
    intervals.%
  }}%
\label{fig:calibration}
\end{figure}

\subsection{Police Stops}

The police stops dataset used in this analysis is drawn from data gathered in
2022 pursuant to California' Racial Identity and Profiling Act (RIPA). Under
RIPA, law enforcement officers must record a wide range of information about
stops they make of vehicles and pedestrians, including the stop circumstances
(e.g., date and time), demographics of the stopped individual (e.g., race and
age), reasons for the stop (e.g., traffic violation or matching the description
of a person of interest), outcomes of the stop (e.g., weapons or other
contraband found), actions taken during the stop (e.g., arrest or use of force),
and other information. First enacted in 2015, RIPA requirements were extended to
all law enforcement agencies in the state in 2022.

The full dataset comprises more than 4.5 million records of police stops from
536 law enforcement agencies~\citep{RipaBoardReport2024}.
Following~\citet{grossman2024reconciling}, we restrict our analysis to
non-consensual, discretionary stops not conducted in an educational context.
That is, we exclude:
\begin{enumerate}
  \item Stops undertaken because the officer had knowledge of an arrest warrant
    or some form of mandatory supervision (i.e., probation, parole, or
    post-release community supervision);
  \item Consensual encounters with motorists or pedestrians, which are recorded
    inconsistently across agencies;
  \item Stops that occurred in an educational context, or for education-related
    reasons, such as truancy, as the threshold for conducting stops of this
    nature can be lower than for other types of stops. (See,
    e.g.,~\citealt{william}.)
\end{enumerate}
To ensure that sufficient data exist to accurately estimate the decision and
outcome rates, we further restrict our analysis to agencies that conducted at
least 1,000 stops of Black, Hispanic, and White individuals in 2022. This leaves
us with a final dataset of approximately 2.8 million stops from 56 law
enforcement agencies.

\subsection{Lending}

\paragraph*{Data Source}

Our lending data come from an online financial technology platform that
facilitates lending to individuals for a variety of purposes based on a
proprietary underwriting model that utilizes both traditional and
non-traditional information to assess lending risk. Our training data consist of
approximately 300,000 borrowers who applied for loans between January 2019 and
July 2021 for whom repayment outcomes are known. While race is not recorded in
the training data, it is imputed using names and geographic information using
Bayesian Improved Surname Geocoding or ``BISG''~\citep{elliott2009using}. For
each individual, we impute the most likely race using the BISG model, and then
subset to individuals whose imputed race is Black, Hispanic, or White. We then
fit our model on the approximately 260,000 individuals remaining. To understand
the distribution of risk among loan applicants, we apply our fitted risk model
to a separate set of 130,000 individuals who applied for loans in early- to
mid-August 2021, but who did not necessarily receive a loan.

\paragraph*{Risk Model}

We fit a gradient-boosted decision tree model to predict the likelihood that an
individual will default before the end of the loan term using a bevy of
traditional and non-traditional covariates, including credit score, available at
the time of application. The model achieves an AUC of 73\% on a 10\% set of
held-out training data, and is well-calibrated, as shown in
Figure~\ref{fig:calibration}.

\subsection{Recidivism}

\paragraph*{Data Source}

The recidivism dataset used in this analysis comes from ProPublica's analysis of
COMPAS recidivism prediction scores from Broward County,
Florida~\citep{angwin2022machine}. The dataset contains COMPAS scores for
approximately 12,000 individuals who were assessed for recidivism risk pre-trial
in 2013 and 2014, as well as information on whether they were rearrested within
two years of their initial assessment. After subsetting to the collection of
individuals for whom outcomes are known, and whose race is recorded as Black,
Hispanic, or White, we are left with a final dataset of approximately 11,000
observations.

\paragraph*{Risk Model}

To predict probability of recidivism, we use raw COMPAS scores, which range
continuously from approximately -3 to 3. To obtain calibrated probabilities, we
fit a logistic regression model to predict recidivism from the interaction of
the raw COMPAS score and race. The model achieves an AUC of 68\% on the held-out
training data, which is consistent with the AUC reported for COMPAS scores in a
variety of jurisdictions~(%
  \citealt{brennan2009evaluating}, \citealt{lansing2012new},
  \citealt{dieterich2011predictive}, \citealt{farabee2010compas},
  \citealt{flores2016false}, \citealt{reich2016evidence},
  \citealt{dieterich2017compas}, \citealt{dieterich2018compas},
  \citealt{angwin2022machine}; see \citealt{equivant2019compas} for a review%
).
The model is also well-calibrated, as shown in Figure~\ref{fig:calibration}.

\subsection{Contraband}

\paragraph*{Data Source}

Under \emph{Terry v.~Ohio}, Police officers may stop and question pedestrians
for whom they have ``reasonable suspicion'' of criminal activity, and may
conduct a ``frisk'' of the individual's outer clothing if they believe the
stopped individual is in possession of a weapon~\citep{terry}. Between 2003 and
2013, the New York Police Department (NYPD) conducted over 100,000 such
\emph{Terry} stops per year. Officers recorded details of these stops, including
stop circumstances (e.g., time and location), reasons for suspicion (e.g.,
suspicious bulge or furtive movements), stop outcomes (e.g., arrest or
contraband found), and demographic information about the stopped individual. We
use records of the approximately 2.7 million stops conducted between 2008 and
2013. Our training set consists of the approximately 800,000 stops conducted
between 2008 and 2010 in which the stopped individual was frisked, and
consequently for which it is known whether or not the individual was in
possession of a weapon. We split the training data into an 80\% training set,
which we use to fit our model predicting weapon possession, and a 20\% holdout
set, which we use to evaluate the performance of our model. To understand the
overall distribution of risk among stopped individuals, we apply our fitted risk
model to the approximately 1.1 million stops conducted in 2011 and 2012.

\paragraph*{Risk Model}

We fit a gradient-boosted decision tree model to predict the likelihood that an
individual who is frisked will be found to be in possession of a weapon using 33
covariates known at the time of the stop and before the frisk decision is made.
These covariates include the time and location of the stop, the officer's reason
for making the stop, and additional circumstances related to the stop, such as
whether the individual was in proximity to a crime scene. The model achieves an
AUC of 79\% on the held-out training data. The model is reasonably
well-calibrated for risks below 9\% (approximately the 99th percentile of the
risk distribution), although it is modestly over-predictive of risk for White
individuals; see Figure~\ref{fig:calibration}.

\subsection{Admissions}

\paragraph*{Data Source}

We draw our bar passage dataset from the Law School Admissions Council's
Longitudinal Bar Passage Study~\citep{wightman1998lsac}, which studied bar
passage rates among 163 of 172 ABA-accredited law schools in the United States.
Study participants represent around 23,000 out of the approximately 40,000 law
school students who entered law school in the fall of 1991. The dataset includes
information about the law schools attended by the students, their undergraduate
GPAs, their LSAT scores, their gender and race, and whether they passed the bar
exam on their first or second attempt.

\paragraph*{Risk Model}

We fit a logistic regression model to predict the likelihood that an individual
eventually passed the bar exam using the individual's LSAT score, undergraduate
GPA, a measure of socioeconomic status, gender, and race. The model achieves an
AUC of 79\%, and is well-calibrated, as shown in Figure~\ref{fig:calibration}:
for Black and Hispanic applicants, the model is well-calibrated across a range
of estimated risks, and for White applicants, it is well-calibrated for risks
greater than roughly 90\% (approximately the 3rd percentile of risk for White
applicants).

\section{Simulation Study}%
\label{app:simulation}

We extend our simulation study to quasi-rational decision policies of the
following form:
\[
  d_{g}(r) = F_{\T {beta}}(r; t_g, \sigma^2),
\]
where \(F_{\mathrm{beta}}(x; t, \sigma^2)\) is the cumulative distribution
function of the beta function with mean \(t\) and variance \(\sigma^2\). (In the
standard beta parameterization, the distribution
\begin{equation}
\label{eq:beta-defn}
  \D {Beta}(\alpha, \beta), \qquad \text{where} \quad \alpha \defeq t \cdot
  \left(\frac {t \cdot (1 - t)} {\sigma^2} - 1 \right), \qquad \beta \defeq (1 -
  t) \cdot \left(\frac {t \cdot (1 - t)} {\sigma^2} - 1 \right),
\end{equation}
has mean \(t\) and variance \(\sigma^2\).) To model a realistic level of bounded
rationality, we set \(\sigma\) to be one half of the standard deviation of the
overall distribution of estimated risk. (In particular, we set \(\sigma\) to be
\(12\;\T {p.p.}\) for the lending dataset, \(3.5\;\T {p.p.}\) for the COMPAS
dataset, \(3.5\;\T {p.p.}\) for the contraband dataset, and \(1.8\;\T {p.p.}\)
for the bar passage dataset.) As before, we then sweep over all percentiles
\(t_g\) (excluding the 0th and 100th percentiles); see
Figure~\ref{fig:example-policies} for examples of the quasi-rational
risk-decision curves used in the simulation, where \(t_g\) is set to the
\(1/3\), \(1/2\), and \(2/3\) quantiles, respectively. At each percentile, we
estimate the decision and outcome rates for group \(G = g\) as
follows:\footnote{%
  Our notation in Eq.~\eqref{eq:sim} differs slightly from
  Appendix~\ref{app:sampling} to take advantage of the fact that the risk
  decision curves are known exactly.
}
\begin{equation}
\label{eq:sim}
  \widehat {\D {DR}}_g \defeq \frac 1 {n_g} \sum_{i=1}^n \bb 1(G = g) \cdot
  d_g(\hat R_i), \qquad \qquad \widehat{\D {OR}}_g \defeq \frac {\sum_{i=1}^n
  \bb 1(G = g) \cdot d_g(\hat R_i) \cdot \hat R_i} {\sum_{i=1}^n \bb 1(G = g)
  \cdot d_g(\hat R_i)},
\end{equation}
where \(d_g(r)\) is the risk-decision curve being tested for group \(G = g\),
\(\hat R_i\) is the estimated risk for individual \(i\), and \(n_g\) is the
number of individuals in group \(g\). (We note that this is equivalent to the
method used in the main text, differing only in that the risk-decision curve
takes values in \((0, 1)\).) We then estimate the difference in decision and
outcome rates between groups \(g_1\) and \(g_2\) as
\[
  \widehat {\D {DR}}_{g_1} - \widehat {\D {DR}}_{g_2}, \qquad \qquad \widehat
  {\D {OR}}_{g_1} - \widehat {\D {OR}}_{g_2},
\]
and test for discrimination using the robust and standard outcome tests
accordingly.

Figures~\ref{fig:beta-simulation-robust} and~\ref{fig:beta-simulation-standard}
show that broadly similar, if even more pronounced, results hold when the
risk-decision curves are quasi-rational as opposed to fully rational. In
particular, the robust outcome test rarely detects discrimination when decisions
are made in the same way for both groups, and usually detects discrimination
when there is a double standard. In particular,
Figure~\ref{fig:beta-simulation-standard} shows that in cases where the base
rates differ substantially between groups and the decision maker has limited
sensitivity to risk, the standard outcome test can essentially always ``detect''
that there is a lower threshold for the lower base rate group, regardless of the
actual decision thresholds.

\begin{figure}
  \begin{center}
    \includegraphics{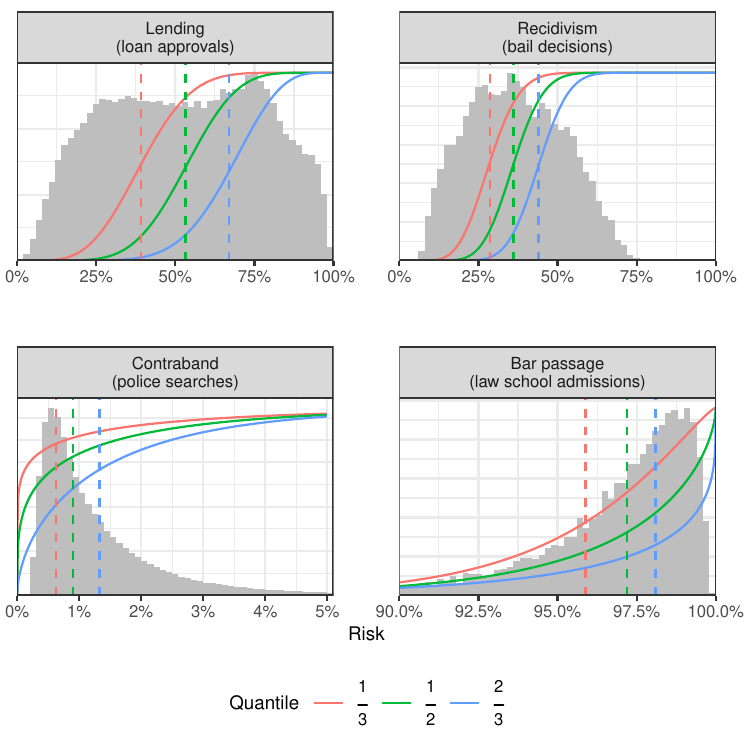}
  \end{center}
  \caption{\emph{%
    Examples of the quasi-rational risk-decision curves used in the simulation
    study. The \(x\)-axis indicates the estimated risk. The overall distribution
    of risk is shown by the grey histogram. The red, green, and blue curves show
    the quasi-rational risk-decision curves centered at the \(1/3\), \(1/2\),
    and \(2/3\) quantiles---shown by red, green, and blue dashed vertical
    lines---respectively.%
  }}%
\label{fig:example-policies}
\end{figure}

\begin{figure}
  \begin{center}
    \includegraphics{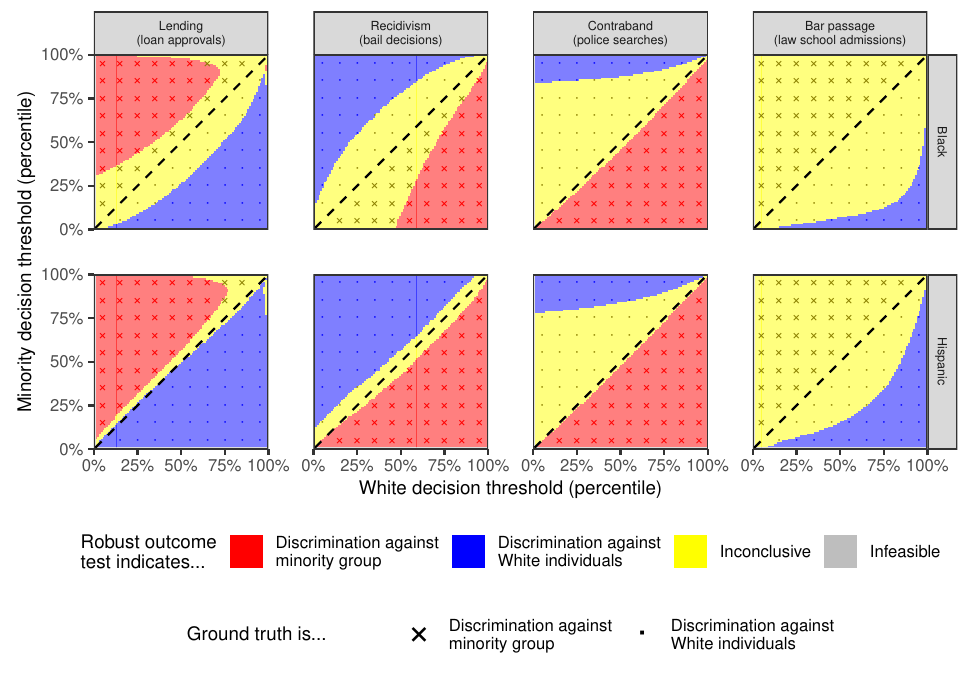}
  \end{center}
  \caption{\emph{%
    Results of the simulation study for the robust outcome test with
    quasi-rational (i.e., beta CDF) risk-decision curves. The \(x\)-axis
    indicates the ``center''---i.e., \(t\), in Eq.~\eqref{eq:beta-defn}---for
    White individuals; the \(y\)-axis indicates the center for Black or Hispanic
    individuals, as appropriate. The upper-left and lower-right triangular
    regions correspond to scenarios where decision makers discriminate against
    either the minority group or White individuals, as indicated by the
    ``\,\(\times\)'' and ``\,\(\cdot\)'' symbols, respectively;
    non-discriminatory scenarios are shown by the dashed diagonal line. Red
    regions indicate where the robust outcome test suggests discrimination
    against the minority group, blue regions indicate where the robust outcome
    test suggests discrimination against White individuals, and yellow regions
    indicate where the robust outcome test is inconclusive.%
  }}%
\label{fig:beta-simulation-robust}
\end{figure}

\begin{figure}
  \begin{center}
    \includegraphics{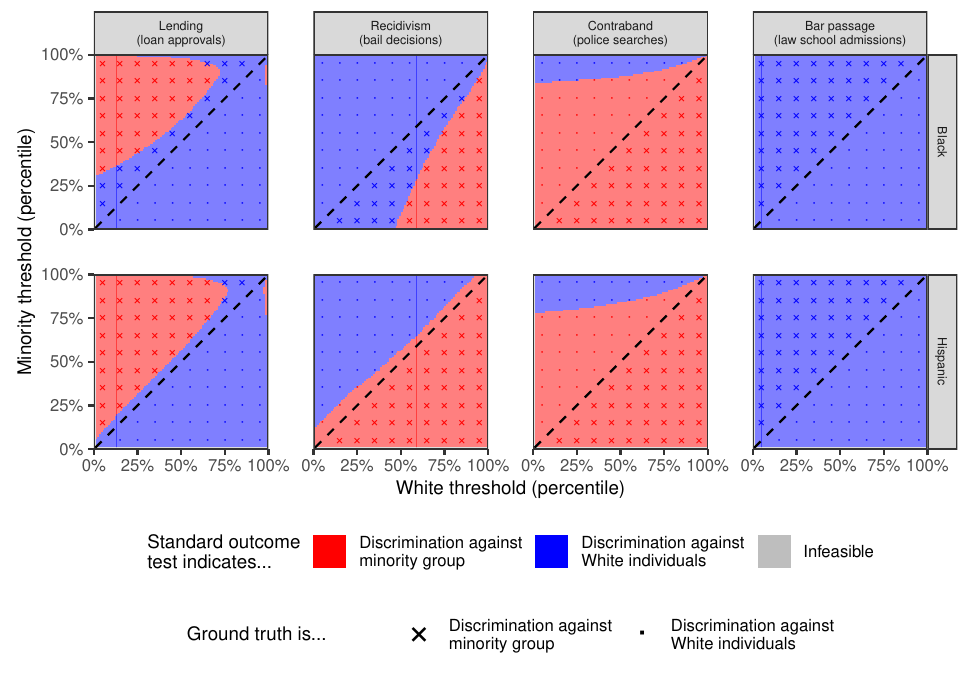}
  \end{center}
  \caption{\emph{%
    Results of the simulation study for the standard outcome test with
    quasi-rational (i.e., beta CDF) risk-decision curves. The \(x\)-axis
    indicates the ``center''---i.e., \(t\), in Eq.~\eqref{eq:beta-defn}---for
    White individuals; the \(y\)-axis indicates the center for Black or Hispanic
    individuals, as appropriate. The upper-left and lower-right triangular
    regions correspond to scenarios where decision makers discriminate against
    either the minority group or White individuals, as indicated by the
    ``\,\(\times\)'' and ``\,\(\cdot\)'' symbols, respectively;
    non-discriminatory scenarios are shown by the dashed diagonal line. Red
    regions indicate where the standard outcome test suggests discrimination
    against the minority group, and blue regions indicate where the standard
    outcome test suggests discrimination against White individuals.%
  }}%
\label{fig:beta-simulation-standard}
\end{figure}

\end{document}